\newtheorem{prop}{Proposition}
\newtheorem{remark}{Remark}[section]
\newtheorem{thm}{Theorem}
\newtheorem{dfn}[thm]{Definition}
\newcommand{\Rlogo}
{\protect\includegraphics[height=1.8ex,keepaspectratio]{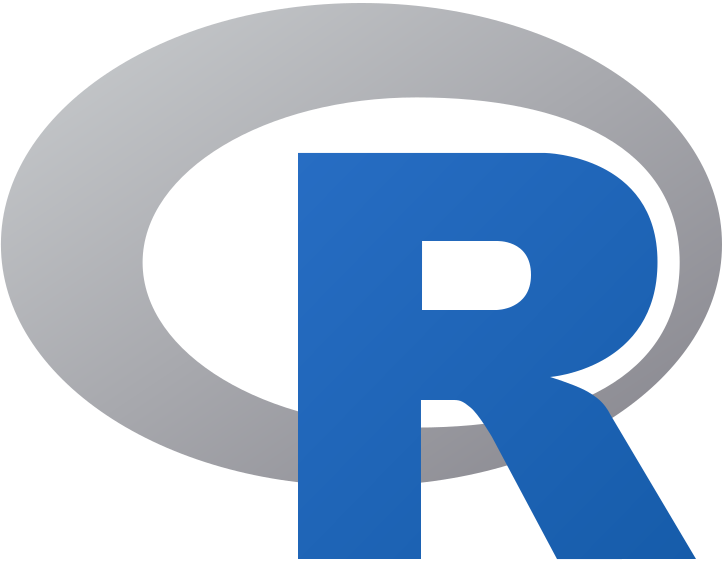}}
\newcommand{\pkg}[1]{{\normalfont\fontseries{b}\selectfont #1}}
\theoremstyle{plain} 
\newtheorem{theorem}{Theorem}[section]
\newtheorem{proposition}[theorem]{Proposition}
\theoremstyle{remark} 
\setlist[enumerate]{wide=0pt, widest=99,leftmargin=\parindent, labelsep=*}
\definecolor{darkblue}{RGB}{0,0,139}
\definecolor{darkgreen}{RGB}{0,100,0}
\definecolor{darkviolet}{RGB}{148,0,211}
\newcommand{\commCFJV}[1]{{\leavevmode\color{green!50!black}#1}}
\begin{document}

\thispagestyle{empty} \baselineskip=28pt \vskip 5mm
	\begin{center} {\Large{Frequency-Domain Analysis of Time Series with Network-Structured Dependence: Application to Global Bank Connectedness}}
	\end{center}
	
	\baselineskip=12pt \vskip 5mm
	\begin{center}\large
		Cristian F. Jim\'enez-Var\'on\footnote[1]{
			\baselineskip=10pt Department of Mathematics. University of York. E-mail: cristian.jimenezvaron@york.ac.uk; , marina.knight@york.ac.uk }, Marina I. Knight$^1$\end{center}	
	
	\baselineskip=16pt \vskip 1mm \centerline{\today} \vskip 8mm

	\begin{center}
		{\large{\bf Abstract}}
	\end{center}
	\baselineskip=17pt
Financial spillovers in interconnected systems, such as global banking networks, require tools that capture temporal and frequency dynamics, while incorporating the underlying network topology. While current network time series models are developed in the time-domain, frequency-domain approaches, which reveal how cross-nodal dependencies vary across different cycles, remain under-explored. This paper develops a spectral analysis framework that accommodates flexible forms of network dependence, including interactions mediated through intermediate nodes. This ensures that inter-nodal relationships are not restricted to direct connections, a feature crucial for capturing indirect financial spillovers. We define the network time series spectral density, alongside coherence and partial coherence, and propose both parametric and network-constrained nonparametric methods for their estimation. Simulations and theoretical results demonstrate the strong performance of the parametric approach when the data-generating process aligns with the model structure, whereas the nonparametric alternative provides robustness against model misspecification. An application to global bank connectedness shows that the proposed spectral measures capture inter-bank frequency-specific spillover effects, yielding results consistent with existing measures while additionally uncovering richer patterns of volatility transmission that are intimately connected to the network topology.

	\begin{doublespace}
		
		\par\vfill\noindent
		{\bf Key words}: Autoregressive processes, Bank network connectedness, Network data, Spectral estimation, Time series.
	\end{doublespace}
	
	\pagenumbering{arabic}

\clearpage
\section{Introduction}\label{sec:intro}

Understanding the structural interconnections among financial institutions—particularly banks—is fundamental for assessing systemic risk and ensuring financial stability. Financial systems are inherently interacting: institutions are linked through credit exposures, liquidity provision, derivative contracts, and overlapping asset holdings. These structural relationships give rise to a network topology underlying financial markets, which governs the transmission and amplification of shocks. While connectedness measures have become a central concept in financial econometrics \citep{DieboldYilmaz2009,DieboldYilmaz2014}, they can be interpreted more generally as empirical summaries of the underlying network architecture of financial interdependence \citep{Wu2020,Gabauer2023}. A central research question emerging from this literature is how specific network structures—such as direct versus multi-hop connections—govern the propagation of shocks.

To illustrate this complexity, Figure \ref{fig:banking_network} presents a global banking network consisting of 57 major financial institutions across different countries and regions, based on data from \citet{Demirer2018}. The network (bank/nodal) connectivity can be derived from physical links or prior empirical evidence, and our framework treats this structure as a known topology that shapes the evolving temporal cross-nodal interactions.

\begin{figure}[htbp]
    \centering
    \includegraphics[width=\textwidth, height=250pt, keepaspectratio]{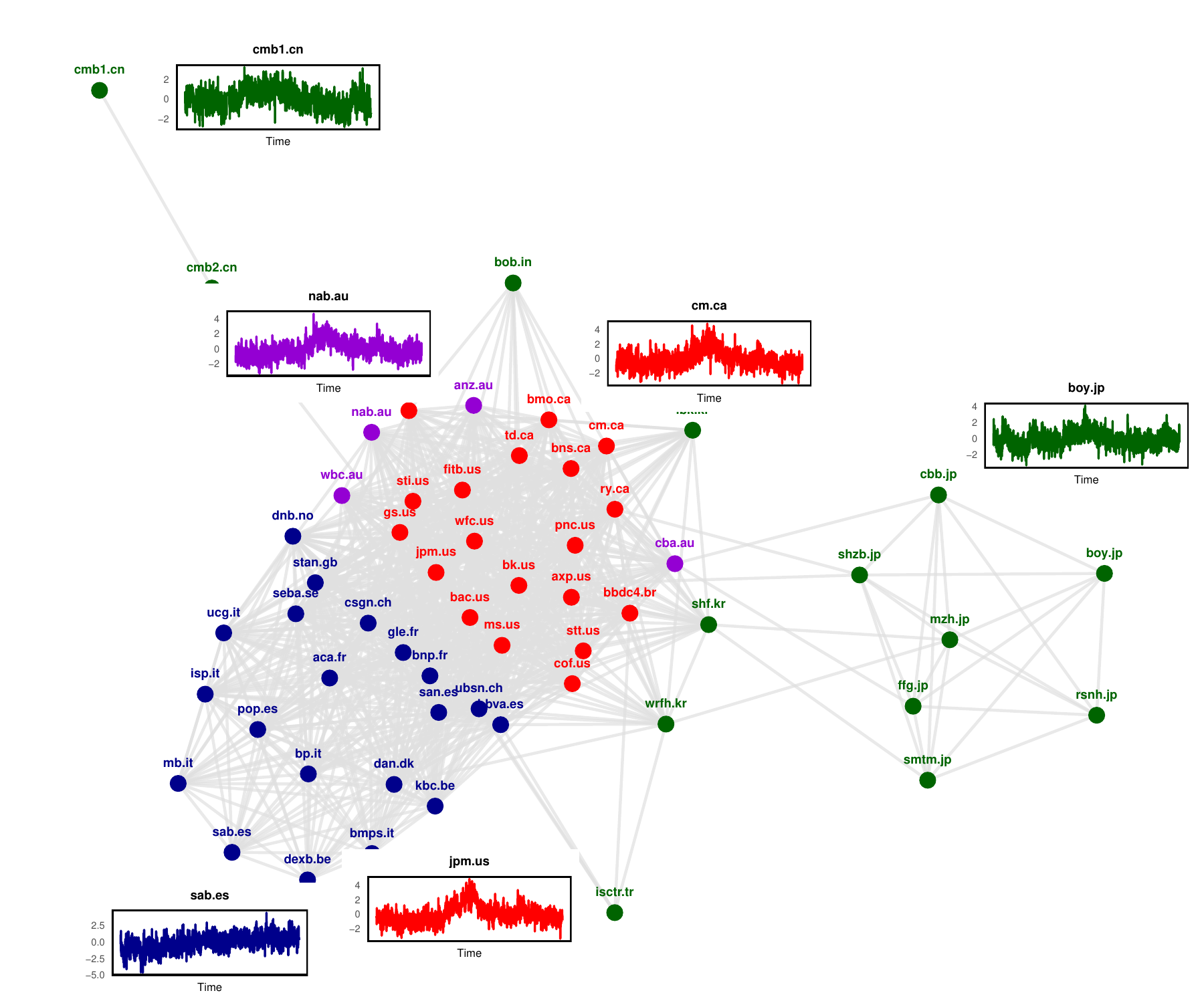}
    \caption{Global Banking Network. Each node represents a financial institution, with edges reflecting institutional links. Node colors indicate regional origin: \textcolor{red}{$\bullet$}~Americas, \textcolor{blue}{$\bullet$}~Europe, \textcolor{darkgreen}{$\bullet$}~Asia, and \textcolor{violet}{$\bullet$}~Australia. The time series shown at selected nodes track the evolution of stock return volatility, with dependencies governed by the network’s underlying topology.}
    \label{fig:banking_network}
\end{figure} 

Specifically, in this setting we do not merely observe a collection of independent time series; rather, we observe measurements collected at the nodes of a graph that evolve over time, leading to time-indexed observations constrained by a specific, known network topology. This  highlights the dual challenge of modern financial analysis: capturing the temporal dynamics of individual bank data while accounting for the institutional interdependencies that facilitate or restrict the flow of shocks across different time-spans (frequencies).

A substantial body of literature has focused on {\em modeling connectedness in the time-domain}. The framework of \citet{DieboldYilmaz2009,DieboldYilmaz2014} uses forecast error variance decompositions from vector autoregressive (VAR) models to quantify spillovers, generating a network of interdependencies. High-dimensional extensions employ regularization techniques, such as Lasso, to identify sparse structures and isolate dominant channels of risk propagation \citep{Demirer2018}. While powerful, these approaches aggregate dynamic behavior over time, masking the possibility that spillovers may operate differently over various time-horizons. 

Recognizing these limitations, \citet{Barunik2018} and \citet{Krampe2025} proposed {\em spectral-domain} frameworks to decompose variance contributions across frequency bands, identifying heterogeneous responses to shocks. However, a conceptual gap remains: these approaches treat the system as a purely data-driven multivariate or factor-based process where the network is a statistical byproduct. By contrast, they fail to explicitly incorporate a pre-defined network topology—such as the institutional links in Figure \ref{fig:banking_network}—as {\em the} structural constraint that governs the evolution of the whole system. This drives the need for a paradigm that {\em integrates graph structure} into the frequency-domain analysis, allowing for a  spillover analysis that is mediated through multi-hop connections rather than just unrestricted pairwise ones.

This challenge is not unique to finance; network-structured time series data have gained growing attention across fields such as neuroscience \citep{Bullmore2009}, where the brain's physical architecture constrains functional communication across different frequency bands \citep{Medaglia2015}. 

In the {\em time-domain} study of time series with network structured dependence (TS-NSD), substantial efforts have focused on developing multivariate autoregressive frameworks. An early approach was the construction in \cite{knight2016}, followed by the Network Vector Autoregressive (NVA) model of \cite{Zhu2017} and significantly extended by the Generalized Network Autoregressive (GNAR) framework \citep{Knight2020} where the NVA model can be seen as a particular case. Unlike earlier iterations, GNAR introduced $r$-stage neighborhood dependencies, allowing for complex interactions mediated through intermediate nodes and establishing a more flexible architecture. This framework, along with its various extensions, has proven versatile in empirical studies, including applications to international trade flows, viral spread, and wind power forecasting \citep{Nason2022,Mantziou2023}.

Within this class of models, a critical distinction arises between parameter flexibility and estimation efficiency. \cite{Yin2023} proposed a framework that relaxes stringent stationarity and Gaussianity assumptions to accommodate node-specific heterogeneity and non-Gaussian error processes. Further extensions have recently explored latent group structures \citep{Zhu2025}, edge-based processes \citep{Mantziou2023}, and smoothly time-varying linkages for locally stationary processes \citep{Chen2025, Li2025}. However, while these may provide higher flexibility, estimating distinct parameters for every node can lead to over-parameterization in large-scale networks. In contrast, frameworks that prioritize parsimony by sharing network effect parameters across neighborhoods act as a structural regularizer, enhancing estimation efficiency, numerical stability and prediction accuracy \citep{Nason2026}. 

Despite these time-domain advances, there remains a significant gap in the literature for a methodology capable of revealing the frequency-specific composition of these network relationships. Related efforts in Graph Signal Processing (GSP) have introduced spectral frameworks based on joint time-vertex stationarity \citep{Loukas2019}. These typically treat the network as a static operator—often a fixed Laplacian—used to regularize signal recovery tasks and primarily function as ``spatial'' filters rather than explicitly modeling the autoregressive pathways through which shocks propagate over time. Consequently, GSP-based methods are often restricted to immediate nodal interactions and lack the stochastic flexibility to characterize the multi-step transmission effects inherent in networked dynamical systems. 

Spectral estimation for network time series involves distinct methodological challenges. The high dimensionality inherent in large-scale networks requires regularization to ensure both computational feasibility and statistical reliability. However, standard tools such as the graphical Lasso \citep{hastie2009}—which are designed for real-valued covariance {\em selection} \citep{Dempster1972,lauritzen1996}—cannot be directly applied to the complex-valued spectral density and precision matrices fundamental to frequency-domain analysis. While recent developments, such as the complex graphical Lasso \citep{deb2024}, address this by constructing isomorphisms between complex-valued Hermitian matrices and real-valued representations, the problem of incorporating {\em known topological constraints} into these {\em frequency-domain objects} remains relatively unexplored. This work addresses these complexities by developing a framework where the network structure directly informs the spectral density and provides a robust pathway for high-dimensional inference.

By contrast to data-driven network discovery, in this work we assume a known network structure and {\em propose a spectral estimation framework for TS-NSD that directly incorporates the network topology}. To allow for potential departures from assumed modeling frameworks, we develop both parametric and nonparametric estimators while accounting for the network structure. Under parametric settings, unlike GSP-based methods, our approach leverages the structural properties of the broad class of network time series models including those discussed in \citet{Zhu2017}, \citet{Knight2020}, and \citet{Yin2023} to define a frequency-domain paradigm that naturally embeds the network. By treating these models as specialized, parsimonious representations of a VAR process, we establish network-specific spectral density and (partial) coherence measures. This parametric view is complemented by a nonparametric estimator that provides a robust alternative to parametric approaches, by directly constraining deviations from the network structure. Our estimators allow the inclusion of multi-hop neighborhood information, thus accommodating both direct and indirect temporal-spatial dependencies. This integration of network constraints into spectral analysis yields a flexible and interpretable methodology suited for complex, high-dimensional network time series data, which will be shown to be superior to classical spectral estimation via unrestricted VAR modeling. 


The paper is organized as follows. Section~\ref{sec:litrev} reviews the literature on TS-NSD establishing their unified representation as constrained VAR processes to link physical network topology with the stochastic dependencies used in spectral-domain analysis. Section~\ref{sec:Methods} defines the network spectral density and introduces the parametric and nonparametric estimation frameworks, whilst Section~\ref{NAR_spec_PN} proposes spectral estimators obtained via network constrained optimization. Section~\ref{sec:sim_results} evaluates the empirical performance of the proposed spectral estimation methods through extensive simulations. In Section~\ref{sec:application} we illustrate the utility of our framework for the motivating global bank network connectedness data, revealing underlying structures of systemic risk and providing new insights into how inter-bank dependence relationships vary across global financial networks. 


\section{Brief review of network time series modeling framework}\label{sec:litrev}

Formally, a \emph{network time series} \( X := (\mathbf{X}_t, G) \) is a stochastic process consisting of a multivariate time series $\{ \mathbf{X}_t \}_t\in \mathbb{R}^d$ recorded over the vertices (nodes) of an underlying network \( G = (K, E) \). Here, \( G \) is an undirected static graph with \( K \) denoting the set of $d$ nodes, and \( E \subseteq K \times K \) its set of edges. The distance between any pair of nodes \( (i,j) \) is denoted by \( \delta(i,j) \), and is given by the length of the shortest path between them. Following standard graph-theoretic conventions, an adjacency matrix \( \mathbf{A}\in \{0,1\}^{d \times d}  \) encodes the direct interactions among pairs of nodes in the graph \( G \), with a 1 marking each edge. 

Next we outline the primary characteristics of recently introduced models for such time series with network structured dependence (TS-NSD), where component dependencies are explicitly governed by an underlying graph. Specifically, we briefly review the GNAR framework of \cite{Knight2020} and the Network Autoregressive (NAR) models of \cite{Yin2023}, both of which offer alternative structures for modeling nodal interactions. Under such parametric modeling strategies, TS-NSD models can be represented as constrained VAR processes \citep{Lutkepohl2005}—a formulation sometimes referred to in the literature as stable network VAR models \citep{Li2025}. This connection establishes a fundamental link between network time series and graphical models which we subsequently pursue.

\subsection{The GNAR model of \cite{Knight2020}}\label{sec:gnarintro}

The $\text{GNAR}(p, [s_1, \dots, s_p])$ model defines the evolution of a nodal series $\{X_{i,t}\}$ as a function of its own (node $i$) history and the past values of its neighbors up to distance $s_k$ for each temporal lag $k=1, \ldots, p$, namely
\begin{equation}
    X_{i,t} = \sum_{k=1}^{p} \left(\alpha_{i,k} X_{i, t-k} + \sum_{r=1}^{s_k} \beta_{kr} \sum_{j \in \mathcal{N}_r(i)} w_{ij} X_{j,t-k} \right) + u_{i,t},
    \label{eq:GNAR}
\end{equation}
where $\alpha_{i,k}$ is the autoregressive parameter for node $i$ at lag $k$, $\beta_{kr}$ denotes the effect of the $r$-stage neighborhood $\mathcal{N}_r(i)$, and $\{u_{i,t}\}_t$ are independent and identically distributed white noise terms. A significant advantage of GNAR models is their parsimonious nature. In traditional VAR models, the number of parameters increases quadratically with the dimensionality $d$ and linearly with the order $p$. In contrast, the parameter space of a GNAR model is primarily determined by the depth of the $r$-stage neighborhood regression and the maximum lag. The specification in Equation~\eqref{eq:GNAR} allows for node-specific autoregressive coefficients, requiring the estimation of $(dp + \sum_{k=1}^p s_k)$ parameters. However, the framework is frequently adapted to a ``global-$\alpha$'' setting where $\alpha_{i,k} = \alpha_k$ for all nodes $i \in K$. In this case, the model contains only $p$ autoregressive coefficients and $\sum_{k=1}^{p} s_k$ neighborhood regression coefficients. This contrasts sharply with a standard $\text{VAR}(p)$ model, which requires $pd^2$ parameters. Provided that $p + \sum_{k=1}^p s_k < pd^2$, the GNAR framework ensures stable estimation and physical interpretability in high-dimensional TS-NSD settings where $d$ may be large relative to the observation period $T$. In the sequel, we focus on the global GNAR specification, though the local representation remains a valid extension.

In matrix form, the process may be represented as a constrained $\text{VAR}(p)$, $\mathbf{X}_t = \sum_{k=1}^{p} \mathbf{\Phi}_{k} \mathbf{X}_{t-k} + \mathbf{u}_{t}$, where the coefficient matrices $\mathbf{\Phi}_{k}$ are restricted by the network topology
\begin{equation}
    \mathbf{\Phi}_{k} = \mathbf{D}_{\alpha,k} + \sum_{r=1}^{s_k} \beta_{kr} (\mathbf{W} \odot \mathbf{A}_r).
    \label{eq:VAR_coeff}
\end{equation}
In the above, $\mathbf{D}_{\alpha,k} = \text{diag}(\alpha_{1,k}, \dots, \alpha_{d,k})$ denotes the diagonal matrix of node-specific autoregressive effects, $\mathbf{W} \in \mathbb{R}^{d \times d}$ provides the edge weights, while the sequence $\{ \mathbf{A}_r \}_{r \geq 1}$ represents the $r$-stage adjacency matrices; $\odot$ denotes the Hadamard product. Specifically, each entry in $\mathbf{A}_r$ is set to 1 if the corresponding node pair are $r$-stage neighbors, with $\mathbf{A}_1$ corresponding to the baseline adjacency matrix $\mathbf{A}$. This construction partitions the network influence according to shortest-path distance within each temporal lag, hence establishing a physically interpretable foundation for a spectral analysis.

\subsection{The NAR model of \cite{Yin2023}}\label{sec:nar_intro}

An alternative TS-NSD framework is the $\text{NAR}(q_1, q_2)$ model, which prioritizes nodal heterogeneity over the multi-stage spatial partitioning utilized in GNAR. In this model, $q_1$ and $q_2$ denote the number of self-lags and network-lags, respectively. The framework allows both the autoregressive and network effect parameters to vary across all nodes $i \in K$, resulting in the following nodal evolution
\begin{equation*}
    X_{i,t} = \sum_{j=1}^{q_1} \alpha_{i}^{(j)} X_{i, t-j} + \sum_{k=1}^{q_2} \beta_{i}^{(k)} \sum_{m=1}^{d} w_{im} X_{m,t-k} + u_{i,t},
    \label{eq:NAR_nodal_Yin}
\end{equation*}
where $\alpha_{i}^{(j)}$ and $\beta_{i}^{(k)}$ are node-specific self-lag and network-lag coefficients. 

Defining $q=\max\{q_1, q_2\}$, the process can be expressed in a compact vector form as $\mathbf{X}_t = \sum_{l=1}^{q} \mathbf{\Phi}_{l} \mathbf{X}_{t-l} + \mathbf{u}_{t}$. To maintain consistency with our notation so far, the transition matrices $\mathbf{\Phi}_l$ are structured as
\begin{equation}
    \mathbf{\Phi}_{l} = \mathbf{D}_{\alpha,l} + \mathbf{D}_{\beta,l} \mathbf{W},
    \label{eq:NAR_coeff_Yin}
\end{equation}
where $\mathbf{D}_{\alpha,l} = \text{diag}(\alpha_1^{(l)}, \dots, \alpha_d^{(l)})$ and $\mathbf{D}_{\beta,l} = \text{diag}(\beta_1^{(l)}, \dots, \beta_d^{(l)})$ are diagonal matrices containing the nodal parameters at lag $l$. 

A significant contribution of this framework is the derivation of a more flexible stability condition. Stationarity is established provided the spectral radius $\rho(\mathbf{\mathcal{A}}) < 1$, where $\mathbf{\mathcal{A}}$ denotes the companion form matrix of the VAR representation \citep{Lutkepohl2005}. This represents a broader and less restrictive condition than the absolute coefficient sums commonly used in earlier network literature. Furthermore, the model is designed to be robust under a broad class of innovations $\mathbf{u}_t$, explicitly accommodating dependent error structures such as spatial correlation or heteroskedasticity.

\subsection{Graphical models for TS-NSD}
\label{sec:graph_theory}

Many time-domain models for TS-NSD tend to focus on immediate network neighbors \citep{Zhu2017, Yin2023}. Under the graphical modeling framework of \cite{Dahlhaus2000}, if two nodes are not connected through an edge, their corresponding entry in the inverse spectral density matrix must be zero across all frequencies. However, in many real-world networks, nodes that lack a direct edge may still share information via multi-step paths \citep{Xu2016}. A notable exception, GNAR models \citep{Knight2020, nason2025} extend this concept by introducing interactions through $r$-stage neighborhood regressions for $r=1, \ldots, r^*$, where $r^* = \max\{s_1, \dots, s_p\}$ is the maximum active $r$-stage depth. This structure captures the gradual weakening of dependence as nodes become more distantly separated; the influence of an $r$-stage regression term diminishes as $r$ increases and typically becomes negligible when $r > r^*$. Consequently, if two nodes are sufficiently separated in the graph, their cross-dependence across all lags is expected to be minimal. The formal synthesis in \cite{nason2025} links this $r$-stage regression directly to the inverse spectral matrix. Two nodal time series are uncorrelated at all lags, conditioned on the rest of the network, if and only if they share no common active $r^*$-stage neighbors, a trait that we will subsequently use.

\section{Spectral inference for stationary TS-NSD}\label{sec:Methods}

We introduce the spectrum of TS-NSD in Section~\ref{sec:GNAR_Spec} and propose both parametric and nonparametric approaches for its estimation in Section~\ref{GNAR_Spec_Est}. Section~\ref{Par_GNAR_spec} discusses the consistency of the parametric spectral estimator, and an initial nonparametric estimation approach is presented in Section~\ref{sec:Np_AN}.

\subsection{TS-NSD spectral density matrix}
\label{sec:GNAR_Spec}

Should parametric modeling be appropriate, Section~\ref{sec:gnarintro} illustrated that network-structured time series models may be framed as constrained VAR processes, with specific structural constraints unique to each model class. Under this umbrella, the parametric spectral density \cite[see also][p.~685]{Priestley1981} is naturally defined by embedding the network structure directly into a $d$-dimensional VAR($p$) coefficient representation. 

\begin{dfn}[\textbf{TS-NSD spectra}]
\label{def:GNAR_Spec}
Let $\{\mathbf{X}_t\}$ be a stationary multivariate time series process defined on a network $G = (K,E)$. The parametric spectral density matrix is defined as the $d \times d$ complex-valued matrix 
\begin{equation}
   \mathbf{f}(\omega,G) = \mathbf{U}_{p}^{-1}(\omega,G) \mathbf{V}_{d} \left(\overline{\mathbf{U}}_{p}^{-1}(\omega,G)\right)^\top,
    \label{GNAR_Spec}
\end{equation}
where $\mathbf{V}_d$ denotes the covariance matrix of the innovation process $\{\mathbf{u}_{t}\}$ and $\overline{\cdot}$ indicates the complex conjugate; for example, the matrix $\mathbf{U}_{p}(\omega,G)$ in the specific case of the global$-\alpha$ GNAR model is given by
\begin{equation}
   \mathbf{U}_{p}(\omega,G) = \mathds{I}_d - \sum_{k=1}^{p} \left[\alpha_k \mathds{I}_d + \sum_{r=1}^{s_k} \beta_{kr} (\mathbf{W} \odot \mathbf{A}_r) \right] e^{-i 2\pi k \omega},
   \label{Up}
\end{equation}
for any frequency $\omega \in (0,0.5]$. The use of the second argument, $\mathbf{f}(\cdotp,G)$, clarifies that the network $G$ is embedded in the defined spectral quantities.
\end{dfn}

\begin{remark}
Although the above focuses on the GNAR framework as an {\em option} for parametric estimation, similarly following the VAR representations in \eqref{eq:VAR_coeff} and \eqref{eq:NAR_coeff_Yin}, these concepts can be readily adapted to other network-based models such as \cite{Zhu2017,Yin2023}.
\end{remark}

Definition \ref{def:GNAR_Spec} extends the classical VAR spectral representation by incorporating network dependencies through the edge-weight matrix $\mathbf{W}$ and the $r$-stage adjacency matrices $\mathbf{A}_r$, thereby effectively capturing the structural influence of the graph on the spectral properties of the process. Moreover, based on the parametric spectrum in Definition \ref{def:GNAR_Spec}, the population versions of squared coherence and squared partial coherence can be defined as in the following equations, \eqref{eq:coh} and \eqref{eq:pcoh}, with the latter relying on the inverse of the parametric spectrum, $\mathbf{S}(\omega,G) := \mathbf{f}(\omega,G)^{-1}$,

\begin{equation}\label{eq:coh}
[\boldsymbol{\rho}(\omega,G)]_{ij}^2 = 
\frac{ \left|  [\mathbf{f}(\omega,G)]_{ij} \right|^2 }
{  [\mathbf{f}(\omega,G)]_{ii} \cdot [\mathbf{f}(\omega,G)]_{jj}  }, 
\quad i,j=1,\ldots,d,
\end{equation}

\begin{equation}\label{eq:pcoh}
[\boldsymbol{\gamma}(\omega,G)]_{ij}^2 = 
\frac{ \left|  [\mathbf{S}(\omega,G)]_{ij} \right|^2 }
{  [\mathbf{S}(\omega,G)]_{ii} \cdot [\mathbf{S}(\omega,G)]_{jj}  }, 
\quad i,j=1,\ldots,d.
\end{equation}

\subsection{TS-NSD spectral density estimation}\label{GNAR_Spec_Est}

Various estimators for the TS-NSD spectral density matrix may be proposed in order to estimate $\mathbf{f}(\cdotp,G)$.  Definition~\ref{def:GNAR_Spec} suggests a natural starting point to be the parametric estimator that utilises the network time series model parameters and the covariance structure of the innovation process.

\subsubsection{Parametric estimation of TS-NSD spectrum}\label{Par_GNAR_spec}

The parametric estimation of the TS-NSD spectral density matrix $\mathbf{f}(\cdotp,G)$ is derived from its definition in~\eqref{GNAR_Spec}, coupled with estimates of the model parameters in e.g.,~\eqref{Up}, and involves the following steps:

\begin{enumerate}[(a)]
   \item \textbf{Model fitting:} The TS-NSD model coefficients can be estimated using least squares or penalized regression techniques, while incorporating the structure of the underlying network. For instance, the global-$\alpha$ GNAR model coefficients, e.g. $\{\alpha_k\}_{k}$ and $\{\beta_{kr}\}_{k,r}$ can be estimated by expressing the process as a linear model fitted by ordinary (OLS) or generalized least squares (GLS), while the NAR model of \cite{Yin2023} employs GLS or its empirical counterpart (EGLS) to estimate the node-specific parameters, or a ridge-regularized GLS estimator when the number of nodes exceeds the number of time points. Furthermore, methods for order selection to ensure proper model specification are explored in \cite{Knight2020}.

   \item \textbf{Residual covariance estimation:} The covariance matrix of the innovation process, $\mathbf{V}_d$, is estimated from the residuals of the fitted TS-NSD model. In the case of GNAR, Appendix B of \cite{Knight2020} presents a consistent estimator for the innovation covariance matrix adapted from \cite{Lutkepohl2005}. \cite{Mantziou2023} advocate a similar strategy for GNAR-edge fitting. Errors that exhibit heavier tails than Gaussian and a variety of error covariance structures are adapted by \cite{Yin2023}.
    
    \item \textbf{TS-NSD spectral matrix estimation:} The estimated parameters are substituted into the spectral representation in~\eqref{GNAR_Spec}, leading to the parametric estimator:
    \begin{equation}\label{eq:gnarspecest}
        \widehat{\mathbf{f}}(\omega,G) = \widehat{\mathbf{U}}_p^{-1}(\omega,G) \widehat{\mathbf{V}}_d \left( \widehat{\overline{\mathbf{U}}}_p^{-1}(\omega,G) \right)^\top,
    \end{equation}
    where $\widehat{\mathbf{U}}_p(\omega,G)$ is model-dependent and uses the estimated TS-NSD parameters, e.g. see for example~\eqref{Up}.
\end{enumerate}

\noindent{\bf Theoretical properties for the parametric TS-NSD spectrum estimator.}
The desirable behaviour of the parametric estimator is naturally derived from the consistency of the TS-NSD model parameter estimates under standard regularity conditions. 

\begin{proposition}\label{prop:GNAR_Spec_Est}
Let $\{\mathbf{X}_t\}$ be a stationary multivariate time series process defined on a static network $G = (K,E)$. When the true TS-NSD process is correctly specified as in Definition~\ref{def:GNAR_Spec} with independent white noise innovations with finite fourth moment and covariance matrix ${\mathbf{V}}_d$,  the parametric spectral density matrix in equation~\eqref{eq:gnarspecest} employing the associated (GLS) estimators is consistent for the truth,  
\begin{equation}
     \widehat{\mathbf{f}}(\omega,G) \xrightarrow{p} \mathbf{f}(\omega,G), \text{ for all frequencies } \omega \in (0,0.5].
\end{equation}
\end{proposition}
\begin{proof}
The result follows from a straightforward coupling of the estimator properties tailored for~\eqref{Up} and detailed in Propositions 1 and 2 in Appendix~B of \cite{Knight2020}, and the continuous mapping theorem.
\end{proof}

While parametric estimation is computationally efficient and interpretable, its accuracy relies on the correct specification of the model. To mitigate the risk of model misspecification, a nonparametric estimation approach can also be considered, as introduced next. 

\subsubsection{Nonparametric estimation of the TS-NSD spectrum}\label{sec:Np_AN}

Since a TS-NSD process is inherently a multivariate time series, traditional nonparametric spectral density estimation techniques, such as periodogram-based estimators, can be obtained. We first recall the usual construction of a nonparametric spectral estimator, that does {\em not} incorporate network-specific structure. The procedure follows standard multivariate time series methods and involves the primary steps below:

\begin{enumerate}[(i)]
    \item \textbf{Fourier raw periodogram:}\label{rawfourier} Consider a stationary TS-NDS process $\{\mathbf{X}_t\}_{t=1}^T$ over the network $G = (K,E)$ with $d$ nodes. The discrete Fourier transform of $\{\mathbf{X}_t\}$ is 
    \begin{equation*}
        \mathbf{J}(\omega_l) = \frac{1}{\sqrt{T}} \sum_{t=1}^{T} \mathbf{X}_t e^{-i 2 \pi t \omega_l},
    \end{equation*}
    where $\omega_l = l/T$ for $l =0, \, 1, \ldots n_T=([T/2]-1)$ are the evaluation Fourier frequencies. The $d \times d$ periodogram matrix, which serves as a raw estimate of the spectral density, is defined as 
    \begin{equation}
        \mathbf{I}_T(\omega_l) = \mathbf{J}(\omega_l) \overline{\mathbf{J}}(\omega_l)^{\top},
        \label{eq:period}
    \end{equation}
     with its asymptotic behaviour studied in depth across the classical Fourier literature, see e.g.,  Section 11.7 in \cite{Brockwell1992}. Under the assumption that the process $\{\mathbf{X}_t\}$ is Gaussian with an absolutely summable autocovariance function (to which we henceforth refer as Assumption A), asymptotically $\mathbf{J}(\omega_l) \dot\sim N_{\mathbb{C}} \left(\mathbf{0}, \mathbf{f}(\omega_l) \right)$ holds, approximately independent at different Fourier frequencies. 
     
     \begin{remark} Here, we use the notation $\mathbf{f}(\cdotp)$ for the spectral matrix of $\{\mathbf{X}_t\}$ rather than $\mathbf{f}(\cdotp, G)$, in order to emphasize that the network structure underpinning the process is not captured.
     \end{remark}

    \item \textbf{Smoothed periodogram:} \label{sec:smoothper} To obtain a consistent estimator of the network-agnostic spectral matrix $\mathbf{f}$, the periodogram needs to be smoothed. Given the bandwidth parameter $m_T=o(\sqrt T)$ and weight function $\{W_T(\cdot)\}$, a smoothed spectral estimator $\widetilde{\mathbf{I}}_T(\omega_l)$ is defined as
    \begin{equation}
\widetilde{\mathbf{I}}_T(\omega_l) = \frac{1}{2 m_T + 1} \sum_{|k| \leq m_T} W_T(k) \mathbf{I}_T\left(\omega_{(l+k) {\text{mod}(T)}}\right), 
        \label{eq:smooth_per}
    \end{equation}
    where the subscript notation in $\omega_{(l+k)mod(T)}$ is to be understood as evaluating the indices modulo $T$.
    For guidance on the bandwidth and weight choices, see \cite{Priestley1981}. The consistency of the smoothed periodogram is discussed in e.g., \cite{Brockwell1992,Shumway2017}. 

    \begin{remark} Under Assumption A, since the Fourier coefficients are asymptotically $\mathbf{J}(\omega_l) \dot\sim N_{\mathbb{C}} \left(\mathbf{0}, \mathbf{f}(\omega_l) \right)$  at every frequency $\omega_l$ and approximately independent across frequencies,   
    an approximation to the (pseudo) log-likelihood can be constructed and used in order to estimate the spectral density and precision under e.g., further sparsity constraints \citep{deb2024}. The log-likelihood function for the precision matrix $\mathbf{f}^{-1}(\cdot)$ can be written at each frequency $\omega_l$ (up to a constant) as
    \begin{equation}\label{eq:loglik}
     \mathcal{L}(\mathbf{f}^{-1}(\omega_l))= \log \det \mathbf{f}^{-1}(\omega_l) - \operatorname{tr} \left( \mathbf{f}^{-1}(\omega_l) \widetilde{\mathbf{I}}_T(\omega_l)\right),
    \end{equation}    
where $\operatorname{tr}$ denotes the trace operator.
\end{remark}
    
\end{enumerate}

\section{Spectral estimation via network constrained optimization}\label{NAR_spec_PN}
While the smoothed periodogram in Section~\ref{sec:Np_AN}~\ref{sec:smoothper}, as well as traditional parametric VAR-type spectral estimators \citep{Priestley1981}, provide flexible data-driven approaches, they remain fundamentally agnostic to the network topology underpinning TS-NSD processes. We address this limitation by introducing spectral estimators constrained by the underlying adjacency matrix structure that effectively integrate the \textit{known} network connectivity into the estimation procedure in Section~\ref{GNAR_spec_PN}, where estimator consistency is also derived. Section~\ref{sec:GNAR_induced} crucially proposes a framework that allows for the consistent estimation of {\em multi-hop} induced interdependencies in the spectral domain.

\subsection{Constrained spectral estimation via the adjacency matrix}\label{GNAR_spec_PN}
Classical likelihood-based estimation methods under network-driven structural constraints \citep{Songsiri2009} are generally designed for real-valued matrices, whereas the smoothed periodogram is complex-valued. To bridge this gap, akin to \cite{Fiecas2019, deb2024} albeit in our setting we are not driven by network learning but rather by network-structure enforcing, we propose to reformulate the complex-valued constrained optimization spectral estimation problem into an equivalent real-valued setting. We illustrate this framework using the smoothed periodogram $\widetilde{\mathbf{I}}_T(\cdotp)$, however the procedure is general and can be applied to any network-agnostic spectral estimator, including parametric VAR-type estimators.

To achieve this, at each Fourier frequency \( \omega_l \) we decompose the complex-valued spectral estimator \( \widetilde{\mathbf{I}}_T(\omega_l) \in \mathds{C}^{d \times d} \) into its real and imaginary components, denoted by \( \widetilde{\mathbf{C}}(\omega_l) \in \mathds{R}^{d \times d} \) and \( \widetilde{\mathbf{Q}}(\omega_l) \in \mathds{R}^{d \times d} \) respectively, and construct the real-valued augmented matrix
\begin{equation}
\widetilde{\mathbf{\Sigma}}(\omega_l) = {\frac{1}{2}}
\begin{bmatrix}
\widetilde{\mathbf C}(\omega_l) & -\widetilde{\mathbf Q}(\omega_l) \\
\widetilde{\mathbf Q}(\omega_l) & \widetilde{\mathbf C}(\omega_l)
\end{bmatrix}
\label{eq:tildeSigma}
\end{equation}
that serves as a consistent estimator of the population analogue \( \mathbf{\Sigma}(\omega_l) \), which can be viewed as the covariance matrix of the transformed real-valued discrete Fourier coefficients (see Appendix~\ref{app:1} for its derivation).

To introduce the network-based structural constraints, we use the explicit graph edge--spectral matrix connection (see Section~\ref{sec:graph_theory}) and enforce zeros in the inverse covariance matrix \( {\mathbf \Theta}(\omega_l) := {\mathbf \Sigma}^{-1} (\omega_l)\) by means of the adjacency matrix \( \mathbf{A}_1 \) of the physical network \( G = (K, E) \). Since \( {\mathbf \Sigma}(\omega_l) \in \mathds{R}^{2d \times 2d} \), we define an augmented adjacency matrix
\begin{equation}
   \tilde{\mathbf{A}}_1 =
\begin{bmatrix}
\mathbf{A}_1 & \mathbf{A}_1 \\
\mathbf{A}_1 & \mathbf{A}_1
\end{bmatrix},
\label{eq:A1}
\end{equation}
which ensures the structural constraints are adhered to consistently across both real and imaginary blocks. The estimation can then be reformulated as a constrained optimization problem by maximizing the constrained Gaussian log-likelihood (see also~\eqref{eq:loglik}) 
\begin{equation}
\ell(\mathbf{\Theta}) = \log \det \mathbf{\Theta} - \operatorname{tr}(\widetilde{\mathbf{\Sigma}} \mathbf{\Theta}) - \sum_{\{(i,j)\,|\,(\tilde{\mathbf{A}}_1)_{i,j}=0 \}} \lambda_{ij} \theta_{ij},
\label{eq:constrained_likelihood}
\end{equation}
where \( \lambda_{ij} \) are Lagrange multipliers that penalize non-zero entries corresponding to absent edges in the network and for brevity we dropped the frequency dependence. The resulting first-order condition, $\mathbf{\Theta}^{-1} - \widetilde{\mathbf \Sigma} - \mathbf \Lambda = 0$, is solved using iterative regression updates \citep{hastie2009} to obtain the refined precision estimator \( \widehat{\widetilde{\mathbf \Theta}} \) and its inverse \( \widehat{\widetilde{\mathbf \Sigma}} \). Finally, the complex-valued {\em network-constrained} spectral estimator is reconstructed as
\begin{equation}
\widehat{\tilde{\mathbf{f}}}(\omega_l,G):= \widehat{\widetilde{\mathbf{I}}}(\omega_l) = \widehat{\widetilde{\mathbf C}}(\omega_l) - i \widehat{\widetilde{\mathbf Q}}(\omega_l).
\label{eq:GNAR_NP_PN}
\end{equation}

\begin{proposition}\label{prop:Nonpar_Spec_Est}
Let $\{\mathbf{X}_t\}$ be a stationary multivariate time series process defined on a static network $G = (K,E)$ which satisfies Assumption A--Gaussian process with absolutely summable autocovariance function. Then the nonparametric constrained spectral density matrix estimator in equation~\eqref{eq:GNAR_NP_PN}, that maximizes~\eqref{eq:constrained_likelihood}, is consistent for the truth, 
\begin{equation}
     \widehat{\tilde{\mathbf{f}}}(\omega_l,G) \xrightarrow{p} \mathbf{f}(\omega_l,G), \, \forall \omega_l \in (0,0.5].
\end{equation}
\end{proposition}
\begin{proof}
Theoretical properties of this  estimator, including its connection to maximum likelihood theory for covariance selection \citep{Dempster1972, lauritzen1996}, are detailed in Appendix~\ref{app:B}.
\end{proof}

\subsection{Constrained spectral estimation embedding higher-order graph interactions} \label{sec:GNAR_induced}

The spectral estimation introduced in Section~\ref{GNAR_spec_PN} relies on solving the maximization problem~\eqref{eq:constrained_likelihood} subject to constraints   derived from the physical adjacency matrix ${\mathbf{A}}_1$, through its extension~\eqref{eq:A1} and crucially, this assumes that conditional dependence exists only between nodes sharing a {\em direct edge}. However, in many complex systems, the data incorporate higher-order interactions where nodes share information via {\em intermediate paths} \citep{Giraldo2025}. In such scenarios, a penalty based strictly on $\mathbf{A}_1$ is misspecified, as it forces the estimator to suppress valid statistical dependencies between non-adjacent nodes.

To address this, we propose a {\em higher-order} adjacency matrix, $\mathbf{A}_{\text{HO}}$, to inform the optimization. While various network time series parametric models can be used as base for constructing this matrix, the GNAR framework provides a rigorous foundation for capturing such dependencies. Specifically, a GNAR process of maximum depth $r^*$ induces a correlation structure where two nodal time series $\{X_{i,t}\}$ and $\{X_{j,t}\}$ become conditionally uncorrelated at all lags if and only if their distance in the underlying network $G$ satisfies $\delta_{G}(i,j) > 2r^*$. This implies that the induced graph---representing non-zero entries in the precision matrix---is naturally denser than the physical network $G$.

Following this logic, we define the higher-order penalty matrix as
\begin{equation}
    \mathbf{A}_{\text{HO}} = \sum_{r=1}^{\min\{2r^*, r_{\text{max}}\}} \mathbf{A}_r,
\end{equation}
where $r_{\text{max}}$ denotes the network diameter, which simply put is the length of the longest shortest path in $G$. This formulation ensures the penalty only suppresses correlations between nodes that are truly independent under an $r^*$-stage dependency structure. For instance, if $r^*=1$, $\mathbf{A}_{\text{HO}} = \mathbf{A}_1 + \mathbf{A}_2$, capturing both direct and neighbor-of-neighbor interactions. Its mapping from complex- to real-valued $\tilde{\mathbf{A}}_{\text{HO}}$ can be carried out analogously to $\tilde{\mathbf{A}}_1$ in equation~\eqref{eq:A1}, and the penalized spectral estimator will be the solution to a problem akin to~\eqref{eq:constrained_likelihood} with $\tilde{\mathbf{A}}_{\text{HO}}$ replacing $\tilde{\mathbf{A}}_{1}$. Other TS-NSD models, such as \cite{Yin2023}, could similarly define $\mathbf{A}_{\text{HO}}$ by utilizing their specific weighting schemes to identify relevant higher-order paths.

This framework is primarily designed to inform spectral estimators with network connectivity knowledge, and both parametric and nonparametric estimators agnostic of the network could benefit from penalization via $\mathbf{A}_{\text{HO}}$ to capture potential higher-order interactions. However, capturing these interactions introduces a trade-off regarding model parsimony: as the neighborhood depth $r^*$ increases, $\mathbf{A}_{\text{HO}}$ becomes less sparse; in the limit where $2r^* \geq r_{\text{max}}$, the matrix becomes fully connected and the penalization loses its purpose, as it no longer enforces structural constraints. We explore the balance between capturing higher-order dependencies and maintaining meaningful penalization in Section~\ref{sec:sim_results}.

\section{Simulation study}\label{sec:sim_results}

In this section, we introduce a series of simulation scenarios designed to evaluate the estimation methods presented in Sections~\ref{sec:Methods}~and ~\ref{NAR_spec_PN}. The primary objective is to assess the performance of these estimators by examining the impact of sample size \( T \), different network structures \( G \), and potential model misspecification. We focus on networks with \( d=5 \) and \( 10 \) nodes, generated using the \pkg{GNAR} \Rlogo package (see Figure~\ref{fig:Net_1}), to evaluate seven spectral estimation approaches. As noted in Section~\ref{sec:gnarintro}, various TS-NSD models could be utilized as parametric estimators; however, we employ the GNAR framework here due to its computational efficiency and parsimonious parameterization. For instance, \cite{Yin2023} showed that a global-$\alpha$ GNAR$(1,[5])$ in a network of 100 nodes requires the estimation of only 6 model parameters, whereas a NAR$(1,1)$ model allowing for nodal heterogeneity requires at least 200 parameters, yet their reported mean square error results remain highly competitive for the GNAR model. 

Consequently, we evaluate seven estimation methods (EM). The parametric approaches include: the baseline parametric spectrum (\textbf{EM1}: \textbf{Par}); a VAR spectrum agnostic to the network (\textbf{EM2}: \textbf{PVAR}); and VAR spectrum penalized by the higher-order interaction adjacency matrix $\mathbf{A}_{\text{HO}}$ described in Section~\ref{sec:GNAR_induced} (\textbf{EM3}: \textbf{PVAR\_{\text{HO}}}) and the physical adjacency matrix $\mathbf{A}_1$ described in Section~\ref{GNAR_spec_PN} (\textbf{EM4}: \textbf{PVAR\_A1}). Similarly, we assess nonparametric estimators penalized by $\mathbf{A}_{\text{HO}}$ (\textbf{EM5}: \textbf{NP\_{\text{HO}}}) and $\mathbf{A}_1$ (\textbf{EM6}: \textbf{NP\_A1}), alongside a network-agnostic nonparametric version (\textbf{EM7}: \textbf{NP}).

Table~\ref{tab:GNAR_params} presents the true GNAR parameters and model orders for the simulation scenarios considered in Sections~\ref{sim:no_miss} and \ref{sim:miss}, designed to represent a variety of process behaviors, explored under Sections~\ref{sim:no_miss}-- \ref{sim:miss}.

\begin{figure}
    \centering
    \begin{minipage}{0.48\linewidth}
        \centering
        \includegraphics[width=\linewidth]{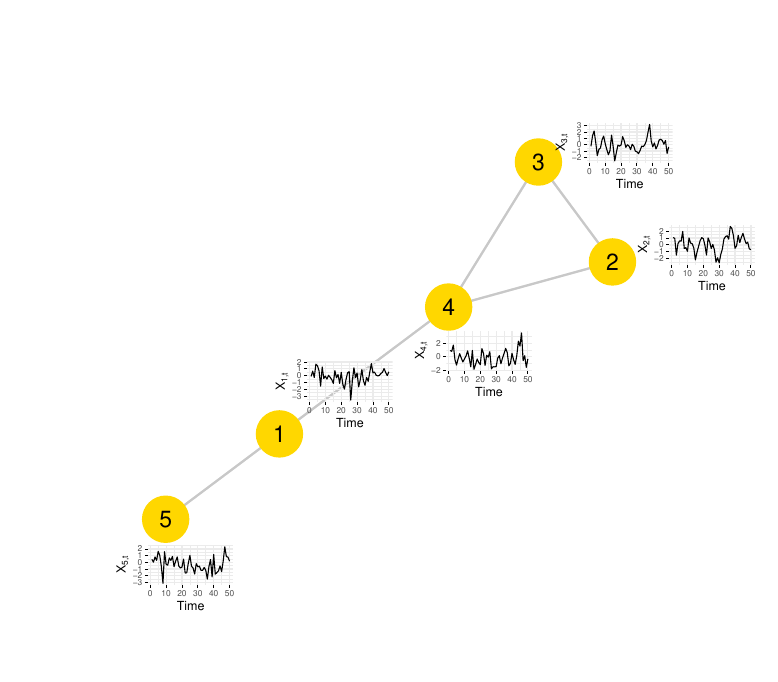}
    \end{minipage}
    \hfill
    \begin{minipage}{0.48\linewidth}
        \centering
        \includegraphics[width=\linewidth]{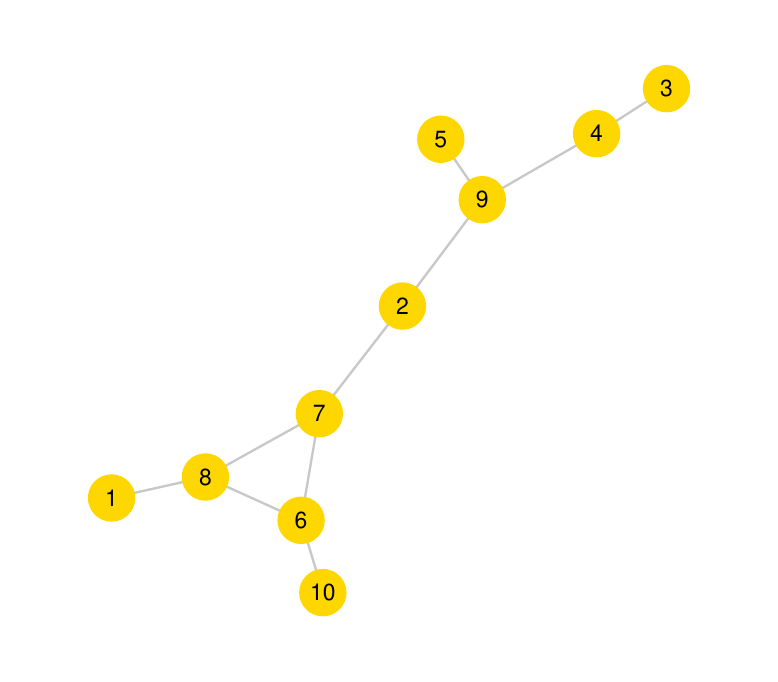}
    \end{minipage}
    \caption{Networks with 5 and 10  nodes network generated with the \pkg{GNAR} \Rlogo package.}
    \label{fig:Net}
    \label{fig:Net_1}
\end{figure}

\begin{table}[h]
    \centering
    \caption{Parameter settings for GNAR simulations}
    \label{tab:GNAR_params}
    \begin{tabular}{lcccc}
        \hline
        Model & $\alpha_1$ & $\alpha_2$ & $\alpha_3$ & $\beta$ \\
        \hline
        \textbf{M1:}GNAR(2, [1,1]) & 0.2 & 0.2 & -- & (0.2, 0.1) for 1-stage \\
         \textbf{M2:}GNAR(2, [1,2]) & 0.1 & 0.1 & -- & (0.075 for 1-stage, 0.05, 0.15 for 2-stage) \\
         \textbf{M3:}GNAR(2, [2,3]) & 0.2 & 0.1 & -- & (0.075, 0.05 for 2-stage, 0.05, 0.05, 0.1 for 3-stage) \\
        \textbf{M4:}GNAR(3, [1,2,3]) & 0.1 & 0.075 & 0.05 & (0.1 for 1-stage, 0.075 for 2-stage, 0.05 for 3-stage) \\
         \textbf{M5:}GNAR(3, [3,3,3]) & 0.15 & 0.1 & 0.05 & (0.05 for 1-stage, 0.05 for 2-stage, 0.05 for 3-stage) \\
        \hline
    \end{tabular}
\end{table}

\subsection{Spectral recovery under exact structural alignment}\label{sim:no_miss}

In this section, we evaluate the precision of spectral estimators in a controlled setting where the underlying network process aligns perfectly with the parametric assumptions. Specifically, the \textbf{EM1 (Par)} method assumes that the model orders—the temporal lag and the $r$-stage neighborhood depth—are known, as detailed in Table~\ref{tab:GNAR_params}. This setup allows us to isolate the performance of spectral-related quantities (coherence and partial coherence) by removing errors stemming from model selection or parameter misspecification.

To characterize the frequency-domain dependencies across the network nodes, we estimate the squared coherence and squared partial coherence (defined in equations~\eqref{eq:coh} and \eqref{eq:pcoh}, respectively). These quantities are central to our framework, providing a dynamic view of node-to-node interactions across frequencies. Asymptotically, these estimators exhibit desirable properties, with consistency following from Propositions~\ref{prop:GNAR_Spec_Est} and~\ref{prop:Nonpar_Spec_Est} by a direct application of Slutsky's theorem \citep{Slutsky1925}. 


The accuracy of the seven estimation methods is quantified using the Root Mean Square Error (RMSE) based on the Frobenius norm:
\begin{equation}
    \mathrm{RMSE} = \left( \frac{1}{R \cdot n_T} \sum_{rep=1}^R \sum_{\ell=1}^{n_T} \left\| \widehat{\mathbf{f}}^{(rep)}(\omega_\ell,G) - \mathbf{f}(\omega_\ell,G) \right\|^2 \right)^{1/2},
    \label{eq:RMSE}
\end{equation}
where $R = 500$ independent realizations are used for each model across sample sizes $T \in \{100, 200, 500, 1000\}$. Tables~\ref{Sim:Table_no_miss_coh} and~\ref{Sim:Table_no_miss_pcoh} present the RMSE results for coherence and partial coherence, respectively (spectral density RMSE is provided in Appendix~\ref{app:C}).

The results demonstrate that when the structural model assumptions are correct, the parametric estimator consistently outperforms nonparametric alternatives. Notably, in larger, sparser networks, penalization further refines the recovery of the spectral matrix. Methods penalized by the network's adjacency matrix typically outperform those penalized by the higher-order correlation structure. However, in the five-node network, where the diameter is smaller than the $r^* = 3$ neighborhood depth, penalization provides no significant benefit to the VAR or nonparametric methods. The superior performance of the parametric approach is particularly evident in low-sample-size regimes ($T=100$), where it leverages the network structure to provide stable estimates regardless of model complexity.

\setlength{\tabcolsep}{3pt} 
\renewcommand{\arraystretch}{0.9} 
\begin{longtable}{|c||*{7}{c|}|*{7}{c|}}
\caption{RMSE $\times 100$ for the coherence estimates obtained using seven estimation methods (EM1–EM7) across five models (M1–M5). Results are presented for both five-node and ten-node networks under increasing time series lengths $T = 100, 200, 500, 1000$.} \label{Sim:Table_no_miss_coh}\\ 
\hline
\multirow{2}{*}{\textbf{Model}} 
& \multicolumn{7}{c|}{\textbf{T = 100}} 
& \multicolumn{7}{c|}{\textbf{T = 200}} \\
\cline{2-15}
& \textbf{EM1} & \textbf{EM2} & \textbf{EM3} & \textbf{EM4} & \textbf{EM5} & \textbf{EM6} & \textbf{EM7} 
& \textbf{EM1} & \textbf{EM2} & \textbf{EM3} & \textbf{EM4} & \textbf{EM5} & \textbf{EM6} & \textbf{EM7} \\
\hline
\endfirsthead
\multicolumn{15}{c}{\textit{(Table \ref{Sim:Table_no_miss_coh} continued - T=100, 200)}} \\
\hline
\multirow{2}{*}{\textbf{Model}} & \multicolumn{7}{c|}{\textbf{T = 100}} & \multicolumn{7}{c|}{\textbf{T = 200}} \\
\cline{2-15}
& \textbf{EM1} & \textbf{EM2} & \textbf{EM3} & \textbf{EM4} & \textbf{EM5} & \textbf{EM6} & \textbf{EM7} & \textbf{EM1} & \textbf{EM2} & \textbf{EM3} & \textbf{EM4} & \textbf{EM5} & \textbf{EM6} & \textbf{EM7} \\
\hline
\endhead
\multicolumn{15}{r}{\textit{(continued on next page)}} \\
\endfoot
\endlastfoot

\multicolumn{15}{c}{\textit{Five Nodes Network}} \\
\hline
\textbf{M1} & 0.97 & 5.82 & 5.27 & 4.29 & 8.25 & 6.56 & 9.17 & 0.73 & 3.19 & 2.95 & 2.51 & 6.03 & 4.91 & 6.63 \\
\textbf{M2} & 1.11 & 5.74 & 5.74 & 4.27 & 8.70 & 6.19 & 8.70 & 0.79 & 3.21 & 3.21 & 2.80 & 6.39 & 4.69 & 6.39 \\
\textbf{M3} & 1.26 & 5.54 & 5.54 & 5.54 & 8.63 & 8.63 & 8.63 & 0.88 & 3.04 & 3.04 & 3.04 & 6.25 & 6.25 & 6.25 \\
\textbf{M4} & 1.48 & 8.22 & 8.22 & 8.22 & 8.73 & 8.73 & 8.73 & 1.05 & 4.43 & 4.43 & 4.43 & 6.40 & 6.40 & 6.40 \\
\textbf{M5} & 1.81 & 8.32 & 8.32 & 8.32 & 8.93 & 8.93 & 8.93 & 1.19 & 4.32 & 4.32 & 4.32 & 6.49 & 6.49 & 6.49 \\
\hline
\multicolumn{15}{c}{\textit{Ten Nodes Network}} \\
\hline
\textbf{M1} & 0.50 & 7.13 & 5.04 & 3.65 & 6.00 & 5.00 & 6.60 & 0.40 & 2.70 & 2.50 & 2.10 & 4.20 & 3.60 & 4.90 \\
\textbf{M2} & 0.60 & 6.90 & 5.50 & 3.60 & 6.80 & 5.10 & 6.80 & 0.45 & 2.75 & 2.75 & 2.30 & 4.40 & 3.55 & 4.60 \\
\textbf{M3} & 0.70 & 6.60 & 6.00 & 5.50 & 6.60 & 6.00 & 6.60 & 0.55 & 2.60 & 2.60 & 2.60 & 4.20 & 4.20 & 4.40 \\
\textbf{M4} & 0.85 & 8.40 & 8.00 & 7.50 & 6.90 & 6.50 & 6.90 & 0.70 & 3.80 & 3.80 & 3.80 & 4.50 & 4.20 & 4.40 \\
\textbf{M5} & 1.00 & 8.50 & 8.20 & 8.10 & 7.10 & 6.80 & 7.00 & 0.85 & 3.60 & 3.60 & 3.60 & 4.60 & 4.30 & 4.30 \\
\hline
\end{longtable}

\vspace{-1.5em} 
\addtocounter{table}{-1} 

\begin{longtable}{|c||*{7}{c|}|*{7}{c|}}
\caption{\textit{(continued)}} \\ 
\hline
\multirow{2}{*}{\textbf{Model}} 
& \multicolumn{7}{c|}{\textbf{T = 500}} 
& \multicolumn{7}{c|}{\textbf{T = 1000}} \\
\cline{2-15}
& \textbf{EM1} & \textbf{EM2} & \textbf{EM3} & \textbf{EM4} & \textbf{EM5} & \textbf{EM6} & \textbf{EM7} 
& \textbf{EM1} & \textbf{EM2} & \textbf{EM3} & \textbf{EM4} & \textbf{EM5} & \textbf{EM6} & \textbf{EM7} \\
\hline
\endfirsthead
\multicolumn{15}{c}{\textit{(Table \ref{Sim:Table_no_miss_coh} continued - T=500, 1000)}} \\
\hline
\multirow{2}{*}{\textbf{Model}} & \multicolumn{7}{c|}{\textbf{T = 500}} & \multicolumn{7}{c|}{\textbf{T = 1000}} \\
\cline{2-15}
& \textbf{EM1} & \textbf{EM2} & \textbf{EM3} & \textbf{EM4} & \textbf{EM5} & \textbf{EM6} & \textbf{EM7} & \textbf{EM1} & \textbf{EM2} & \textbf{EM3} & \textbf{EM4} & \textbf{EM5} & \textbf{EM6} & \textbf{EM7} \\
\hline
\endhead
\multicolumn{15}{r}{\textit{(continued on next page)}} \\
\endfoot
\bottomrule
\endlastfoot

\multicolumn{15}{c}{\textit{Five Nodes Network}} \\
\hline
\textbf{M1} & 0.47 & 1.59 & 1.51 & 1.35 & 3.98 & 3.28 & 4.36 & 0.34 & 1.02 & 0.98 & 0.92 & 2.91 & 2.46 & 3.15 \\
\textbf{M2} & 0.53 & 1.62 & 1.62 & 2.08 & 4.24 & 3.40 & 4.24 & 0.35 & 1.01 & 1.01 & 1.89 & 3.08 & 2.75 & 3.08 \\
\textbf{M3} & 0.55 & 1.52 & 1.52 & 1.52 & 4.16 & 4.16 & 4.16 & 0.40 & 0.95 & 0.95 & 0.95 & 3.35 & 3.25 & 3.60 \\
\textbf{M4} & 0.65 & 1.80 & 1.80 & 1.80 & 4.50 & 4.20 & 4.50 & 0.45 & 1.05 & 1.05 & 1.05 & 3.55 & 3.40 & 3.60 \\
\textbf{M5} & 0.70 & 1.85 & 1.85 & 2.00 & 4.80 & 4.50 & 4.70 & 0.50 & 1.10 & 1.10 & 1.10 & 3.75 & 3.55 & 3.80 \\
\hline
\multicolumn{15}{c}{\textit{Ten Nodes Network}} \\
\hline
\textbf{M1} & 0.38 & 1.58 & 1.30 & 1.20 & 3.45 & 3.00 & 3.30 & 0.30 & 1.05 & 0.95 & 0.85 & 2.85 & 2.50 & 3.00 \\
\textbf{M2} & 0.45 & 1.72 & 1.55 & 1.40 & 3.80 & 3.10 & 3.80 & 0.35 & 1.15 & 1.05 & 0.95 & 3.10 & 2.60 & 3.20 \\
\textbf{M3} & 0.50 & 1.80 & 1.70 & 1.60 & 3.90 & 3.50 & 3.90 & 0.40 & 1.20 & 1.10 & 1.05 & 3.25 & 2.80 & 3.40 \\
\textbf{M4} & 0.65 & 2.10 & 2.00 & 1.90 & 4.20 & 3.80 & 4.20 & 0.50 & 1.35 & 1.25 & 1.20 & 3.45 & 3.10 & 3.50 \\
\textbf{M5} & 0.75 & 2.20 & 2.10 & 2.00 & 4.40 & 4.00 & 4.30 & 0.55 & 1.40 & 1.30 & 1.25 & 3.60 & 3.20 & 3.60 \\
\hline
\end{longtable}

\begin{longtable}{|c||*{7}{c|}|*{7}{c|}}
\caption{RMSE $\times 100$ for the partial coherence estimates obtained using seven estimation methods (EM1–EM7) across five models (M1–M5). Results are presented for both five-node and ten-node networks under increasing time series lengths $T = 100, 200, 500, 1000$.} \label{Sim:Table_no_miss_pcoh}\\ 
\hline
\multirow{2}{*}{\textbf{Model}} 
& \multicolumn{7}{c|}{\textbf{T = 100}} 
& \multicolumn{7}{c|}{\textbf{T = 200}} \\
\cline{2-15}
& \textbf{EM1} & \textbf{EM2} & \textbf{EM3} & \textbf{EM4} & \textbf{EM5} & \textbf{EM6} & \textbf{EM7} 
& \textbf{EM1} & \textbf{EM2} & \textbf{EM3} & \textbf{EM4} & \textbf{EM5} & \textbf{EM6} & \textbf{EM7} \\
\hline
\endfirsthead

\multicolumn{15}{c}{\textit{(Table \ref{Sim:Table_no_miss_pcoh} continued - T=100, 200)}} \\
\hline
\multirow{2}{*}{\textbf{Model}} 
& \multicolumn{7}{c|}{\textbf{T = 100}} 
& \multicolumn{7}{c|}{\textbf{T = 200}} \\
\cline{2-15}
& \textbf{EM1} & \textbf{EM2} & \textbf{EM3} & \textbf{EM4} & \textbf{EM5} & \textbf{EM6} & \textbf{EM7} 
& \textbf{EM1} & \textbf{EM2} & \textbf{EM3} & \textbf{EM4} & \textbf{EM5} & \textbf{EM6} & \textbf{EM7} \\
\hline
\endhead

\multicolumn{15}{r}{\textit{(continued on next page)}} \\
\endfoot
\endlastfoot

\multicolumn{15}{c}{\textit{Five Nodes Network}} \\
\hline
\textbf{M1} & 0.80 & 5.16 & 4.56 & 3.76 & 8.00 & 5.96 & 10.21 & 0.58 & 2.85 & 2.58 & 2.19 & 5.80 & 4.45 & 7.08 \\
\textbf{M2} & 0.94 & 5.08 & 5.08 & 4.04 & 9.85 & 6.01 & 9.85 & 0.66 & 2.86 & 2.86 & 2.68 & 6.90 & 4.55 & 6.90 \\
\textbf{M3} & 0.98 & 4.95 & 4.95 & 4.95 & 9.88 & 9.88 & 9.88 & 0.67 & 2.70 & 2.70 & 2.70 & 6.78 & 6.78 & 6.78 \\
\textbf{M4} & 1.15 & 7.30 & 7.30 & 7.30 & 9.81 & 9.81 & 9.81 & 0.81 & 3.92 & 3.92 & 3.92 & 6.81 & 6.81 & 6.81 \\
\textbf{M5} & 1.43 & 7.39 & 7.39 & 7.39 & 9.90 & 9.90 & 9.90 & 0.90 & 3.85 & 3.85 & 3.85 & 6.80 & 6.80 & 6.80 \\
\hline
\multicolumn{15}{c}{\textit{Ten Nodes Network}} \\
\hline
\textbf{M1} & 0.43 & 5.52 & 3.94 & 2.99 & 6.21 & 4.05 & 14.97 & 0.32 & 2.95 & 2.21 & 1.73 & 4.54 & 3.08 & 9.25 \\
\textbf{M2} & 0.39 & 5.47 & 4.88 & 2.91 & 9.91 & 3.90 & 14.90 & 0.25 & 2.84 & 2.58 & 1.61 & 6.69 & 2.90 & 9.08 \\
\textbf{M3} & 0.42 & 5.46 & 5.46 & 2.96 & 14.92 & 3.94 & 14.92 & 0.28 & 2.82 & 2.82 & 1.56 & 9.02 & 2.89 & 9.02 \\
\textbf{M4} & 0.47 & 8.28 & 8.28 & 4.15 & 14.85 & 3.89 & 14.85 & 0.34 & 4.21 & 4.21 & 2.20 & 9.09 & 2.94 & 9.09 \\
\textbf{M5} & 0.54 & 8.32 & 8.32 & 4.23 & 14.85 & 3.94 & 14.85 & 0.35 & 4.16 & 4.16 & 2.20 & 9.09 & 2.96 & 9.09 \\
\hline
\end{longtable}

\vspace{-1.5em} 
\addtocounter{table}{-1} 

\begin{longtable}{|c||*{7}{c|}|*{7}{c|}}
\caption{\textit{(continued)}} \\ 
\hline
\multirow{2}{*}{\textbf{Model}} 
& \multicolumn{7}{c|}{\textbf{T = 500}} 
& \multicolumn{7}{c|}{\textbf{T = 1000}} \\
\cline{2-15}
& \textbf{EM1} & \textbf{EM2} & \textbf{EM3} & \textbf{EM4} & \textbf{EM5} & \textbf{EM6} & \textbf{EM7} 
& \textbf{EM1} & \textbf{EM2} & \textbf{EM3} & \textbf{EM4} & \textbf{EM5} & \textbf{EM6} & \textbf{EM7} \\
\hline
\endfirsthead

\multicolumn{15}{c}{\textit{(Table \ref{Sim:Table_no_miss_pcoh} continued - T=500, 1000)}} \\
\hline
\multirow{2}{*}{\textbf{Model}} 
& \multicolumn{7}{c|}{\textbf{T = 500}} 
& \multicolumn{7}{c|}{\textbf{T = 1000}} \\
\cline{2-15}
& \textbf{EM1} & \textbf{EM2} & \textbf{EM3} & \textbf{EM4} & \textbf{EM5} & \textbf{EM6} & \textbf{EM7} 
& \textbf{EM1} & \textbf{EM2} & \textbf{EM3} & \textbf{EM4} & \textbf{EM5} & \textbf{EM6} & \textbf{EM7} \\
\hline
\endhead

\multicolumn{15}{r}{\textit{(continued on next page)}} \\
\endfoot
\endlastfoot

\multicolumn{15}{c}{\textit{Five Nodes Network}} \\
\hline
\textbf{M1} & 0.37 & 1.42 & 1.32 & 1.17 & 3.78 & 3.00 & 4.43 & 0.27 & 0.91 & 0.87 & 0.81 & 2.78 & 2.26 & 3.16 \\
\textbf{M2} & 0.44 & 1.45 & 1.45 & 1.99 & 4.39 & 3.30 & 4.39 & 0.30 & 0.91 & 0.91 & 1.79 & 3.11 & 2.65 & 3.11 \\
\textbf{M3} & 0.42 & 1.31 & 1.31 & 1.31 & 4.26 & 4.26 & 4.26 & 0.29 & 0.79 & 0.79 & 0.79 & 3.10 & 2.67 & 3.10 \\
\textbf{M4} & 0.47 & 2.31 & 2.31 & 1.99 & 4.40 & 3.90 & 4.40 & 0.33 & 1.12 & 1.12 & 1.16 & 3.63 & 3.25 & 3.63 \\
\textbf{M5} & 0.54 & 2.48 & 2.48 & 2.02 & 4.55 & 3.98 & 4.55 & 0.34 & 1.23 & 1.23 & 1.17 & 3.77 & 3.35 & 3.77 \\
\hline
\multicolumn{15}{c}{\textit{Ten Nodes Network}} \\
\hline
\textbf{M1} & 0.44 & 1.39 & 1.35 & 1.17 & 4.01 & 3.24 & 5.01 & 0.31 & 0.98 & 0.94 & 0.83 & 2.91 & 2.44 & 3.49 \\
\textbf{M2} & 0.48 & 1.44 & 1.43 & 1.22 & 4.52 & 3.38 & 5.05 & 0.34 & 0.99 & 0.98 & 0.90 & 3.17 & 2.53 & 3.50 \\
\textbf{M3} & 0.47 & 1.43 & 1.43 & 1.18 & 4.58 & 3.46 & 5.07 & 0.32 & 0.97 & 0.97 & 0.90 & 3.25 & 2.62 & 3.55 \\
\textbf{M4} & 0.50 & 2.30 & 2.30 & 1.89 & 4.72 & 3.86 & 5.18 & 0.36 & 1.11 & 1.11 & 1.02 & 3.67 & 3.12 & 3.85 \\
\textbf{M5} & 0.56 & 2.45 & 2.45 & 2.02 & 4.79 & 3.94 & 5.22 & 0.38 & 1.18 & 1.18 & 1.04 & 3.75 & 3.19 & 3.93 \\
\hline
\end{longtable}

\subsection{Robustness of spectral quantities under structural misspecification}\label{sim:miss}

We now examine how uncertainty regarding the underlying network process affects the accuracy of frequency-domain recovery. In this scenario, the parametric spectral estimator must first identify the model orders (lag and neighborhood depth) using the Bayesian Information Criterion (BIC) \citep{BIC}, as described in Section 3 of \cite{Knight2020}. This test determines whether the spectral estimators are robust to the errors typically introduced during the model-fitting stage of a TS-NSD process.

RMSE results for the estimated spectrum, coherence, and partial coherence are summarized in Table~\ref{tab:rmse-miss}. While we observe a slight depreciation in spectral accuracy for small sample sizes—for instance, an RMSE increase by a factor of $1.25 \times 10^{-2}$ at $T=100$—reassuringly, the asymptotic performance is maintained. This indicates that even when the exact parametric form must be estimated from data, the resulting spectral characterization remains highly reliable as $T$ increases.

\begin{table}[!ht]
\centering
\caption{RMSE ($\times 100$) for spectral quantities under model misspecification, evaluating the impact of BIC-based order selection across different sample sizes.}
\label{tab:rmse-miss} 
\resizebox{\textwidth}{!}{%
\begin{tabular}{l|cccc|cccc|cccc}
\toprule
\textbf{Model} & \multicolumn{4}{c|}{\textbf{Spectrum (\%)}} & \multicolumn{4}{c|}{\textbf{Coherence (\%)}} & \multicolumn{4}{c}{\textbf{Partial Coherence (\%)}} \\
\midrule
& T=100 & T=200 & T=500 & T=1000 & T=100 & T=200 & T=500 & T=1000 & T=100 & T=200 & T=500 & T=1000 \\
\midrule
\multicolumn{13}{c}{\textit{Five Nodes Network}} \\
\midrule
\textbf{M1}   & 9.90 & 6.80 & 4.20 & 2.96 & 1.25 & 0.83 & 0.50 & 0.36 & 1.04 & 0.68 & 0.40 & 0.28 \\
\textbf{M2}   & 8.99 & 5.46 & 3.33 & 2.20 & 1.45 & 0.89 & 0.55 & 0.37 & 1.26 & 0.75 & 0.47 & 0.31 \\
\textbf{M3}   &11.85 & 8.85 & 5.17 & 3.28 & 1.52 & 1.13 & 0.68 & 0.44 & 1.26 & 0.91 & 0.53 & 0.33 \\
\textbf{M4}   &11.61 & 8.93 & 5.96 & 3.81 & 1.71 & 1.30 & 0.88 & 0.55 & 1.37 & 1.02 & 0.66 & 0.41 \\
\textbf{M5}   &14.11 &11.37 & 7.90 & 5.47 & 1.99 & 1.55 & 1.08 & 0.73 & 1.49 & 1.10 & 0.77 & 0.53 \\
\midrule
\multicolumn{13}{c}{\textit{Ten Nodes Network}} \\
\midrule
\textbf{M1}   & 5.63 & 3.85 & 2.38 & 1.58 & 0.72 & 0.49 & 0.29 & 0.19 & 0.63 & 0.43 & 0.25 & 0.17 \\
\textbf{M2}   & 4.97 & 2.95 & 1.80 & 1.23 & 0.63 & 0.35 & 0.23 & 0.16 & 0.55 & 0.30 & 0.19 & 0.13 \\
\textbf{M3}   & 6.68 & 4.65 & 2.66 & 1.74 & 0.74 & 0.49 & 0.29 & 0.19 & 0.61 & 0.38 & 0.21 & 0.14 \\
\textbf{M4}   & 6.12 & 4.72 & 2.88 & 1.76 & 0.73 & 0.56 & 0.34 & 0.22 & 0.59 & 0.43 & 0.26 & 0.16 \\
\textbf{M5}   & 8.14 & 6.59 & 4.31 & 2.56 & 0.97 & 0.77 & 0.50 & 0.29 & 0.67 & 0.49 & 0.31 & 0.17 \\
\bottomrule
\end{tabular}%
}
\end{table}

\subsection{Performance evaluation in nonparametric TS-NSD frameworks}

In the simulation studies presented in Sections~\ref{sim:no_miss}--\ref{sim:miss}, the parametric estimators consistently exhibited superior performance. This behavior is expected, as in those scenarios, the data-generating process (DGP) was explicitly derived from parametric modeling, providing the parametric estimators with the advantage of a correctly specified model. However, in real-world network time series applications, the true underlying DGP is typically unknown. To assess the robustness of our framework under model misspecification, we establish a nonparametric DGP where the stationarity and network dependence are defined through a wavelet-based spectral structure.

To generate these processes, we construct a spectral density directly in the wavelet domain. For a set of active scales $j$, we define an inverse spectrum (precision matrix) $\Theta_j$ that respects the network topology by setting $\Theta_{j; r,s} = -0.3$ for $A_{r,s} = 1$ and ensuring positive definiteness via diagonal dominance. The local spectrum is then given by $S_j = \Theta_j^{-1}$, which implies that the partial coherence is inherently linked to the network adjacency matrix. The innovations $\xi_{k}$ are sampled from a multivariate normal distribution $\mbox{N}(0, \Lambda_j)$, where $\Lambda_j$ is the correlation matrix derived from $S_j$. Finally, the total signal is reconstructed via an Average Basis synthesis \citep{wavethresh} across all active scales. The partial coherence results appear in Table~\ref{tab:nonparam_results}.

\begin{table}[ht]
\centering
\caption{Partial coherence RMSE (\(\times 100\)) for the nonparametric DGP across 500 simulations. Methods compared are: Parametric (\textbf{GNAR}), unconstrained VAR (\textbf{UVAR}), unconstrained nonparametric (\textbf{UNP}), constrained VAR (\textbf{CVAR}), and constrained nonparametric (\textbf{CNP}).}
\label{tab:nonparam_results}
\begin{tabular}{lccccc}
\hline
\textbf{T} & \textbf{GNAR} & \textbf{UVAR} & \textbf{UNP} & \textbf{CVAR} & \textbf{CNP} \\ \hline
$T=128$  & 11.8744 & 11.5317 & 12.0881 & 9.4467  & 7.6769 \\
$T=256$  & 12.0356 & 10.9579 & 8.3856  & 10.3978 & 5.8492 \\
$T=512$  & 12.1336 & 11.3135 & 6.5961  & 11.1724 & 4.9080 \\
$T=1024$ & 12.1710 & 11.7026 & 5.1026  & 11.6626 & 3.9741 \\ \hline
\end{tabular}
\end{table}

Table~\ref{tab:nonparam_results} illustrates a significant shift in performance compared to parametric-led simulations. Under this nonparametric DGP, both the GNAR and VAR-based estimators show stagnant RMSE values regardless of the increase in sample size $T$, indicating a fundamental lack of adaptability to the model misspecification. In contrast, the nonparametric estimators demonstrate consistent convergence, with the RMSE decreasing as $T$ grows. Notably, the network-penalized nonparametric estimator (\textbf{CNP}) achieves the best performance across all sample sizes, confirming that incorporating network information into a flexible nonparametric framework provides the most robust path for spectral estimation when the true generating process is unknown.

\section{Global bank network connectedness and its spectral features}\label{sec:application}

Using our network-informed spectral estimators, we next investigate frequency-specific patterns of connectedness among financial institutions across different countries and regions. For our motivating example, our proposal provides the formal methodology to resolve the frequency-specific composition of network interactions, enabling the decomposition of complex spillovers into their short- and long-term components, a dimension of connectivity that time-domain models/ standard GSP tools inherently aggregate and obscure.


Understanding the interdependence among financial institutions is essential for assessing systemic risk and the propagation of financial contagion. A widely used approach to capture these interconnections is through \emph{connectedness networks}, which quantify how shocks or uncertainty in one institution influence others, either directly or indirectly. This framework forms the basis of several influential studies; see, e.g., \citet{DieboldYilmaz2014,Barunik2018}.

Following \citet{Demirer2018}, we use data collected from 96 publicly traded banks across 29 economies from September 12, 2003 to February 7, 2014, with the sample including nearly all globally systemically important banks (GSIBs). Daily volatility for each bank \( i \) is estimated via the range-based Garman--Klass estimator \citep{GarmanKlass1980}
\begin{equation*}
\hat{\sigma}^2_{i,t} = 0.511(H_{i,t} - L_{i,t})^2 - 0.019\left[(C_{i,t} - O_{i,t})(H_{i,t} + L_{i,t} - 2O_{i,t}) - 2(H_{i,t} - O_{i,t})(L_{i,t} - O_{i,t})\right] - 0.383(C_{i,t} - O_{i,t})^2,
\end{equation*}
where \( O_{i,t}, H_{i,t}, L_{i,t}, C_{i,t} \) denote the logs of daily high, low, open, and close prices of bank \( i \) on day \( t \).


As our framework requires weak stationarity, we apply the locally stationary wavelet (LSW) test \citep{Nason2013} to rigorously assess second-order stationarity, addressing limitations of standard unit root tests which often overlook time-varying dependencies in the covariance structure \citep{Munoz2024}. After a logarithmic transformation to stabilize variance, 57 of the original 96 banks pass the stationarity test at the 5\% significance level. These banks (and corresponding countries) are listed in Table~\ref{app:table_banks} in Appendix~\ref{app:D}. 


We proceed with the dataset $(\mathbf{X}_t, G)$ where $X_{i,t} = \log(\hat{\sigma}_{i,t})$ denotes the process at node $i$ of the graph $G=(K,E)$, constructed using volatility spillovers by means of the Generalized Forecast Error Variance Decomposition (GFEVD) \citep{Pesaran1998}. The resulting topology, obtained following the next three steps, will provide the infrastructure to infer how multi-hop paths mediate frequency-domain dependencies. 

\paragraph{Step 1: Sparse VAR estimation.}
Letting \( \mathbf{X}_t \in \mathbb{R}^d \) be the log-volatility series across \( d = 57 \) banks, we estimate a sparse VAR($p$) model: $\mathbf{X}_t = \sum_{k=1}^{p} \mathbf{\Pi}_k \mathbf{X}_{t-k} + \bm{\varepsilon}_t$. We employ Lasso regression with lag order $p$ selected by the BIC \citep{BIC} and 10-fold cross-validation to select the penalty parameter.

\paragraph{Step 2: Moving average (MA) representation.}
The estimated coefficients are used to obtain the MA representation $\mathbf{X}_t = \sum_{h=0}^{\infty} \mathbf{B}_h \bm{\varepsilon}_{t-h}$ up to a finite forecast horizon \( H = 10 \).

\paragraph{Step 3: GFEVD and network construction.}
The GFEVD matrix \( \Psi(H) = [\psi_{ij}(H)] \) is computed as
\begin{equation*}
\psi_{ij}(H) = \frac{1}{\widehat{\mathbf{V}}_{jj}}  \times \frac{\sum_{h=1}^{H}\left( \mathbf{e}_i^\top \mathbf{B}_h \widehat{\mathbf{V}} \mathbf{e}_j \right)^2}{\sum_{h=1}^{H} \mathbf{e}_i^\top \mathbf{B}_h \widehat{\mathbf{V}}\mathbf{B}_h^\top \mathbf{e}_i}.
\end{equation*}
The network is row-normalized and thresholded at $\tau^*$ to ensure connectivity. Edges in the final undirected network $G=(K,E)$ are weighted by the symmetric average $w_{ij} = \frac{1}{2} ( \psi_{ij} + \psi_{ji} )$.

\paragraph{Resulting empirical network.}
Figure~\ref{fig:banking_network} shows the estimated undirected network; consistent with \citet{Demirer2018}, the network exhibits clear regional clustering. Due to the separation of Asian countries (green), the full layout in Figure~\ref{fig:banking_network} appears visually cluttered, hence for clarity Figure~\ref{fig:network2} in Appendix~\ref{app:D} provides a zoomed-in view excluding Asian countries, while Figure~\ref{fig:Stg_net} depicts the \( r \)-stage neighborhood structures, revealing dependencies extending up to five stages.

To analyze the frequency-domain inter-bank connectedness, we model $(\mathbf{X}_t, G)$ using the framework in Definition~\ref{def:GNAR_Spec} that naturally captures autoregressive dynamics subject to a static network structure, and allows the use of {\em higher-order} adjacency matrices, thus enabling inference based on wider spatial connectivity coupled with nodal (bank) temporal dynamics. We use the PNACF Corbit plot of \cite{nason2025} (Figure~\ref{fig:corbit}, Appendix~\ref{app:D}) to select a GNAR\((2, [1, 1])\) model. The resulting parametric spectral density matrix \(\widehat{\mathbf{f}}(\cdotp,G)\) allows us to disentangle frequency responses while maintaining structural parsimony, as next described.

To examine the impact of connection strength locality across variable time-spans (as captured by their corresponding frequencies), we focus on nine representative node pairs: 
three within-country connected ($r=1$) (\textbf{JPMorgan Chase \& Co.}--\textbf{Bank of America} \texttt{jpm.us}--\texttt{bac.us} , \textbf{BNP Paribas}--\textbf{Credit Agricole} \texttt{bnp.fr}--\texttt{aca.fr}, and \textbf{Mizuho Financial Group}--\textbf{Resona Holdings} \texttt{mzh.jp}--\texttt{rsnh.jp}); 
three cross-country connected ($r=1$) (\textbf{JPMorgan Chase \& Co.}--\textbf{Standard Chartered} \texttt{jpm.us}--\texttt{stan.gb}, \textbf{BNP Paribas}--\textbf{Banco Santander} \texttt{bnp.fr}--\textbf{san.es}, and \textbf{Mizuho Financial Group}--\textbf{Woori Finance Holdings} \texttt{mzh.jp}--\texttt{wrfh.kr}); 
and three disconnected ($r>2$) linked only through multi-stage paths (\textbf{JPMorgan Chase \& Co.}--\textbf{Resona Holdings} \texttt{jpm.us}--\texttt{rsnh.jp}, \textbf{China Merchants Bank}--\textbf{Mediobanca Banca di Credito Finanziario} \texttt{cmb.cn}--\texttt{mb.it}, and \textbf{China Merchants Bank}--\textbf{Shizuoka Bank} \texttt{cmb.cn}--\texttt{shzb.jp}). 

Figure~\ref{fig:gnar_selected_spectral_plots} displays several spectral summaries (modulus, phase, squared coherence, and squared partial coherence), illustrating the inter-bank differential volatility behaviour derived from their network connectivity. While similar patterns of behaviour are associated with lower inter-bank network distances, the strength of volatility interconnectednes is markedly different across the chosen pairs, with within-country pairs exhibiting the strongest volatility coupling, which in some cases are notably almost matched by well-connected cross-country pairs. Regardless of the short-term coupling intensity, all pairs exhibit decay with increasing time-spans (frequencies). 

Spatially, Figure~\ref{fig:gnar_selected2} in Appendix~\ref{app:D} examines the 
disconnected node pairs corresponding to $r=3,4,5-$stage neighbors. 
The results indicate a clear spatial decay in volatility co-movements; 
inter-bank dependencies weaken as the network distance $(r)$ increases, 
a pattern that remains consistent across all frequencies. Phase behavior provides further insight: while adjacent ($r=1$ in Figure~\ref{fig:gnar_selected_spectral_plots}) nodal pairs exhibit near-zero phase (synchronous co-movement), disconnected pairs in Figure~\ref{fig:gnar_selected2} display significant variation at low frequencies. Notably, the 5-stage neighbor pair exhibits a consistent phase inversion, indicating opposing volatility behavior across the spectrum.

\begin{figure}[!ht]
    \centering
    \includegraphics[width=0.45\textwidth]{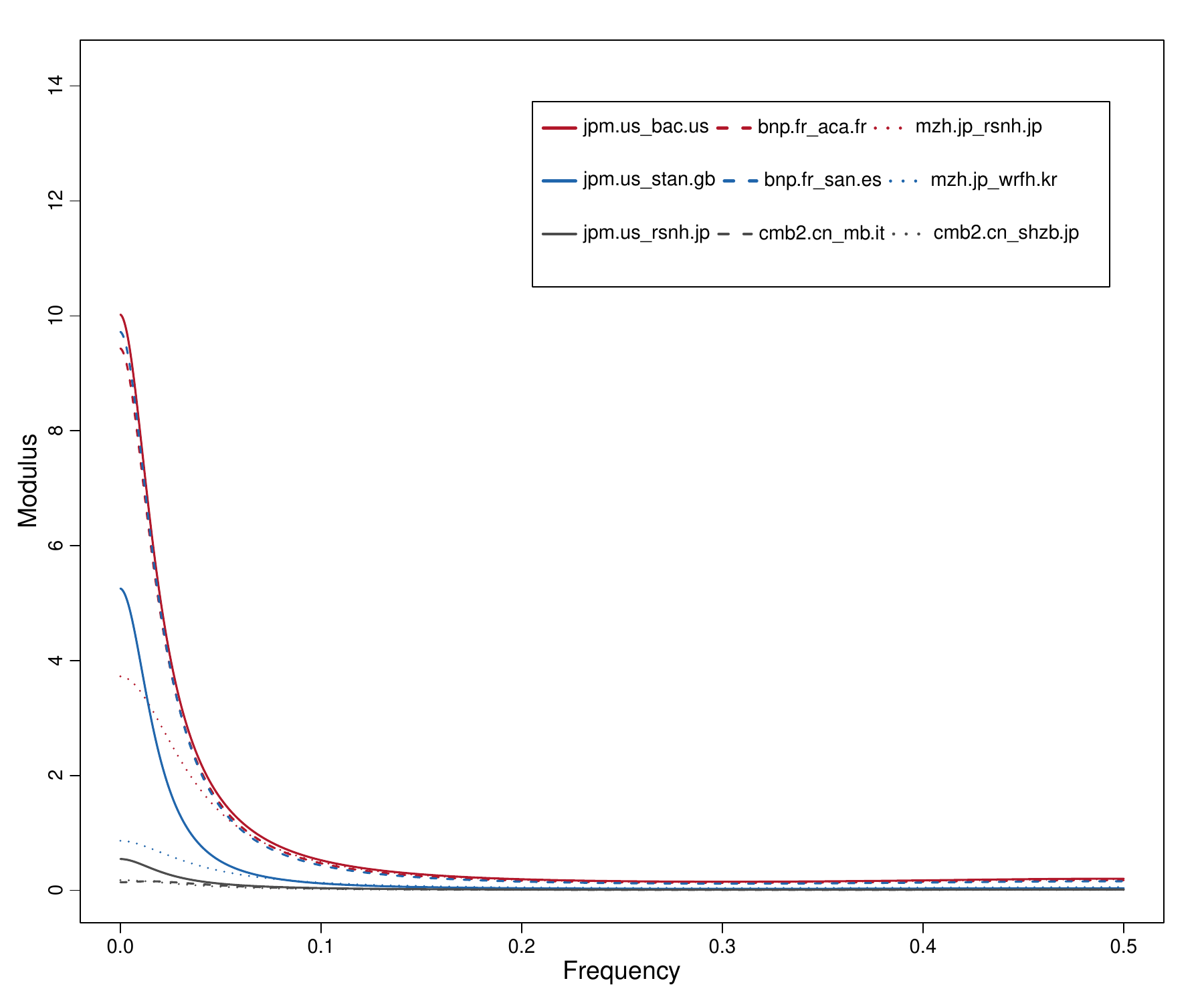} \quad
    \includegraphics[width=0.45\textwidth]{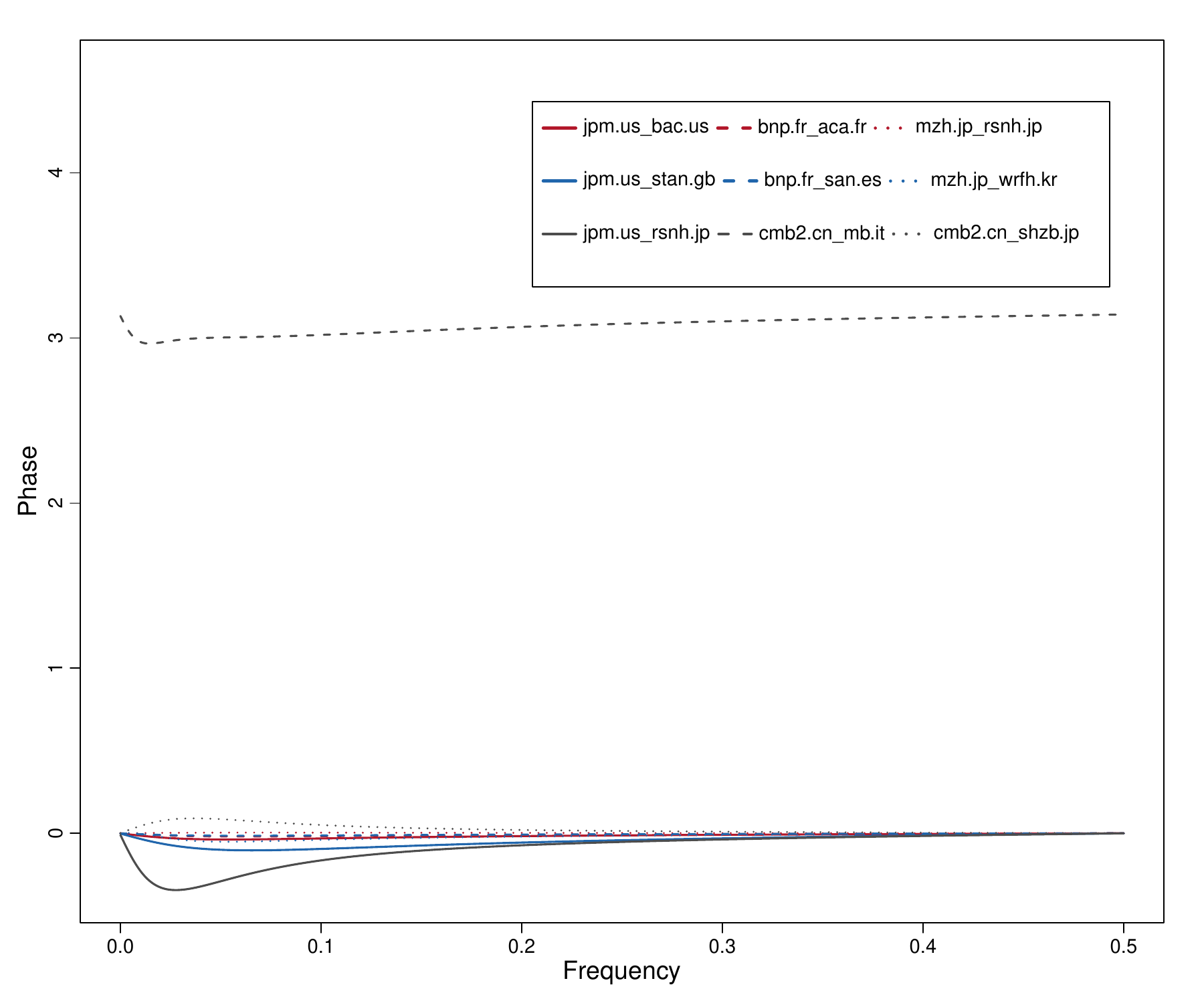} \\
    \vspace{0.5em}
    \includegraphics[width=0.45\textwidth]{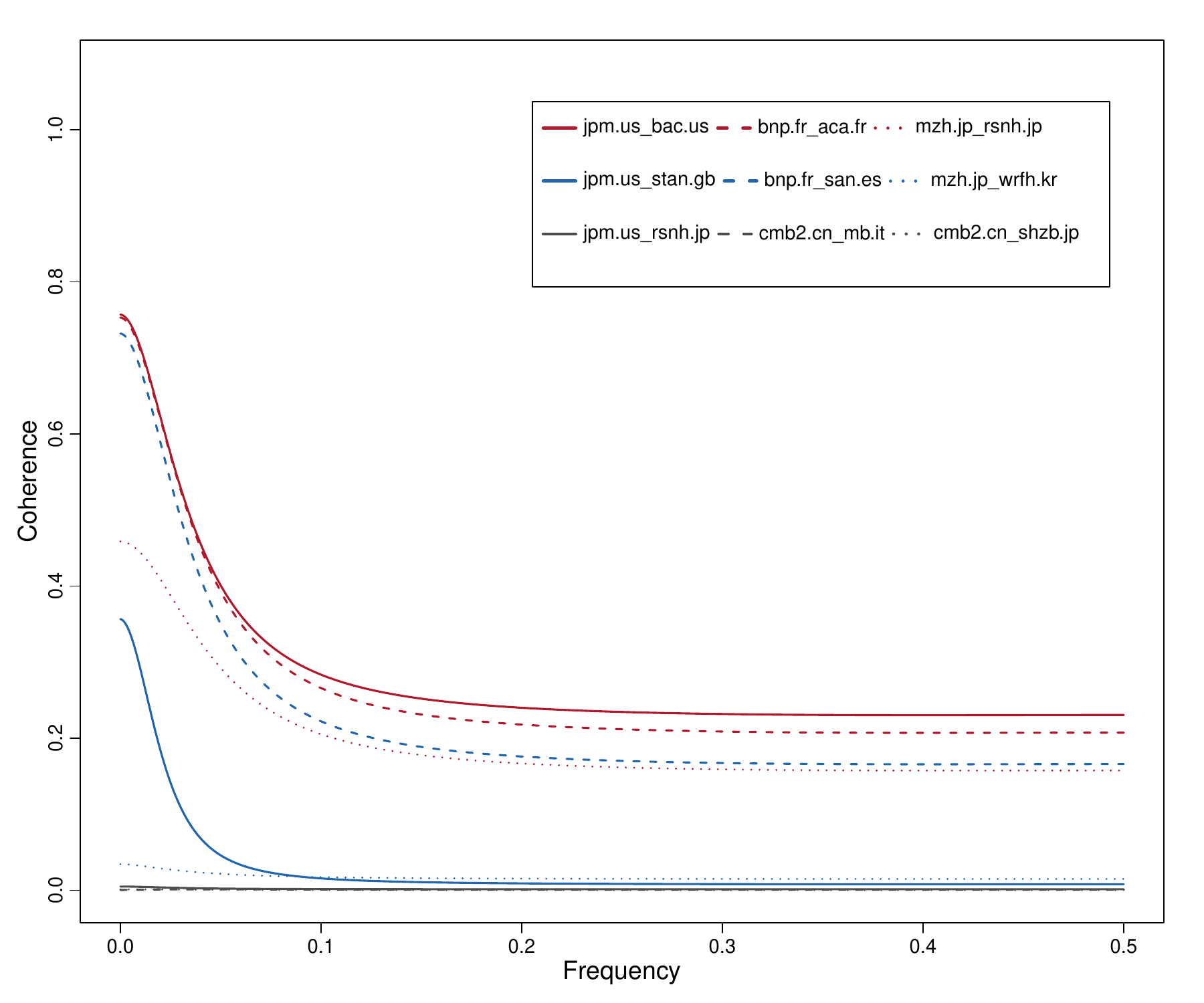} \quad
    \includegraphics[width=0.45\textwidth]{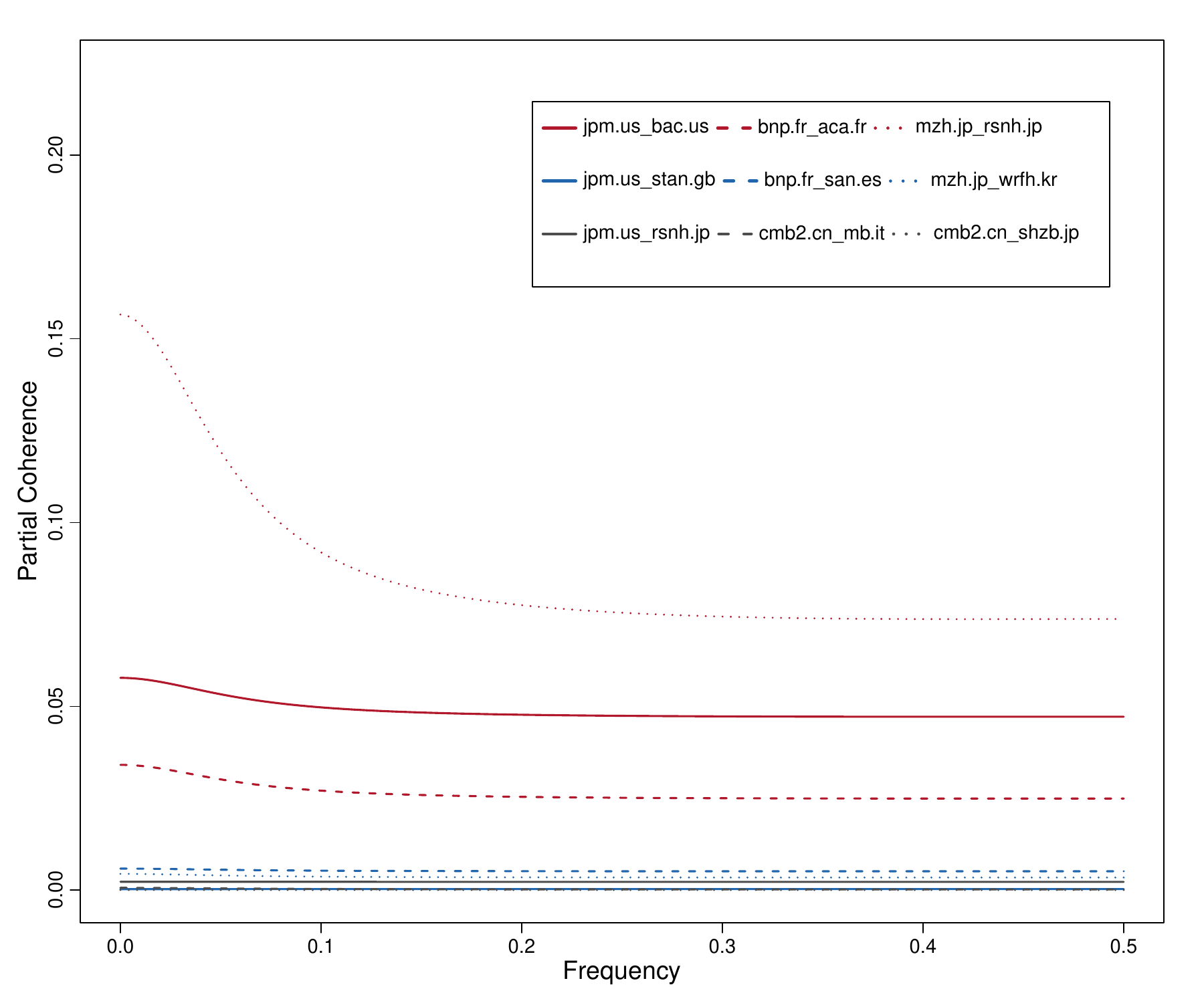}
    \caption{Spectral summaries for nine selected bank pairs $(i,j)$. Top row: modulus \(|[\widehat{\mathbf{f}}(\cdotp,G)]_{ij}|\) and phase \(\arg([\widehat{\mathbf{f}}(\cdotp,G)]_{ij})\) of estimated cross-spectra. Bottom row: squared coherence \([\widehat{\boldsymbol{\rho}}(\cdotp,G)]_{ij}^2\) and partial coherence \([\widehat{\boldsymbol{\gamma}}(\cdotp,G)]_{ij}^2\). Colors indicate connection type: red (same-country), blue (cross-country), gray (disconnected).}
    \label{fig:gnar_selected_spectral_plots}
\end{figure}




\section{Discussion}
We developed an integrated spectral estimation framework for time series with network structured dependence (TS-NSD) that explicitly incorporates the known network topology into the frequency-domain analysis, via both parametric and nonparametric approaches. By including multi-hop dependencies beyond immediate neighbors, our proposal captures a more nuanced cross-nodal signal propagation across different Fourier frequencies. This integration of graph topology into spectral estimation offers a flexible and interpretable tool for analyzing complex network time series data, and has wide applicability across sectors where network data is collected. The effectiveness of our methods was studied through extensive simulations, showcasing setups where each of the parametric and nonparametric estimators perform strongly. The practical utility of our proposal was demonstrated through a global bank network connectedness analysis, where it revealed frequency-dependent inter-bank volatility (synchronous or anti-synchronous) co-movement relationships closely connected to their network distances. These results underscore the framework's potential to improve understanding of systemic risk by revealing how inter-bank dependencies evolve over short versus long time-spans, and across network structures in global financial systems.
Whilst this study assumes a fixed network topology, the proposal provides a robust foundation for extensions into dynamic settings, such as incorporating nonstationary time series over nodes and edges, or treating the network as a dynamic stochastic process \citep{Krampe2019} to better capture evolving systemic risks.


\section*{Supplementary Materials}

\textbf{Code Reproducibility.} The R code used to reproduce the simulation studies described in this paper is publicly available at the following repository:  \href{https://github.com/cfjimenezv07/GNAR_Spectral_Analysis}{Github repository}.


\section*{Funding}
The authors gratefully acknowledge support from the EPSRC NeST Programme Grant EP/X002195/1.

\bibliographystyle{chicago}
\bibliography{York}

\appendix
\renewcommand{\thetable}{\Alph{section}.\arabic{table}}
\renewcommand{\thefigure}{\Alph{section}.\arabic{figure}}
\setcounter{table}{0}
\setcounter{figure}{0}

\section{Construction of the nonparametric estimator in Section~\ref{GNAR_spec_PN}}\label{app:1} 

Recall we defined the smoothed periodogram matrix using a bandwidth $m_T=o(\sqrt T)$ and kernel $W_T(\cdotp)$ as 

\begin{equation}\nonumber
\widetilde{\mathbf{I}}_T(\omega_l) = \frac{1}{2 m_T + 1} \sum_{|k| \leq m_T} W_T(k) \mathbf{I}_T\left(\omega_{l+k}\right), 
\end{equation}
with the indices \({l+k}\) understood as evaluated modulo $T$.
For completeness, the Fourier periodogram matrix used above is defined at frequency \(\omega_l=l/T, \, l=0, \, 1, \ldots n_T=([T/2]-1)\), as in Section~\ref{sec:Np_AN}~\ref{rawfourier},
\[
\mathbf{I}_T(\omega_l) = \mathbf{J}(\omega_l) \overline{\mathbf{J}}(\omega_l)^{\top},
\]
with the discrete Fourier transform (DFT) defined as
\[
\mathbf{J}(\omega_l) = \frac{1}{\sqrt{T}} \sum_{t=1}^T \mathbf{X}_t e^{-i 2 \pi t \omega_l}.
\]

We split the DFT into its real and imaginary parts as
\[
\mathbf{J}(\omega_l) = \mathbf{R}(\omega_l) - i \mathbf{B}(\omega_l),
\]
where
\[
\mathbf{R}(\omega_l) := \operatorname{Re}(\mathbf{J}(\omega_l)) = \frac{1}{\sqrt{T}} \sum_{t=1}^T \mathbf{X}_t \cos(2 \pi t \omega_l),
\]
and
\[
\mathbf{B}(\omega_l) := \operatorname{Im}(\mathbf{J}(\omega_l)) = -\frac{1}{\sqrt{T}} \sum_{t=1}^T \mathbf{X}_t \sin(2 \pi t \omega_l).
\]
Expressed under real-valued form, $\left(\mathbf{R}^\top(\omega_l), \,  \mathbf{B}^\top(\omega_l) \right) ^\top \dot\sim N_{2d}\left(\mathbf{0}, {\mathbf{\Sigma}}(\omega_l)\right)$ \citep{Shumway2017}, where $\mathbf{f}(\omega_l)={\mathbf C}(\omega_l)- i {\mathbf Q}(\omega_l)$ and its corresponding $2d \times 2d$ covariance matrix at frequency $\omega_l$ is 
\begin{equation}\label{eq:truecov}
{\mathbf{\Sigma}}(\omega_l) = {\frac{1}{2}}
\begin{bmatrix}
{\mathbf C}(\omega_l) & -{\mathbf Q}(\omega_l) \\
{\mathbf Q}(\omega_l) & {\mathbf C}(\omega_l)
\end{bmatrix}.
\end{equation}
Substituting the real and imaginary DFT parts into the periodogram and using the Hermitian transpose, 
\[
\mathbf{I}_T(\omega_l) = (\mathbf{R}(\omega_l) - i \mathbf{B}(\omega_l))(\mathbf{R}^\top(\omega_l) + i \mathbf{B}^\top (\omega_l)) = \mathbf{R} (\omega_l)\mathbf{R}^\top (\omega_l)+ \mathbf{B}(\omega_l) \mathbf{B}^\top (\omega_l) - i \left( \mathbf{B} (\omega_l)\mathbf{R}^\top (\omega_l) - \mathbf{R} (\omega_l) \mathbf{B}^\top (\omega_l) \right).
\]
Thus, the real part of the periodogram matrix is
\[
\operatorname{Re}(\mathbf{I}_T(\omega_l)) = \mathbf{R}(\omega_l) \mathbf{R}^\top(\omega_l) + \mathbf{B}(\omega_l) \mathbf{B}^\top(\omega_l),
\]
and the imaginary part is
\[
\operatorname{Im}(\mathbf{I}_T(\omega_l)) =  \mathbf{B}(\omega_l) \mathbf{R}^\top(\omega_l)  - \mathbf{R}(\omega_l) \mathbf{B}^\top(\omega_l).
\]
Finally, the smoothed periodogram matrix can be expressed as
\[
\widetilde{\mathbf{I}}_T(\omega_l) = \widetilde{\mathbf{C}}(\omega_l) - i \widetilde{\mathbf{Q}}(\omega_l) ,
\]
where the real and imaginary parts are given respectively by
\[
\widetilde{\mathbf{C}}(\omega_l) = \frac{1}{2 m_T+ 1} \sum_{|k| \leq m_T} W_T(k) \left[ \mathbf{R}(\omega_{l+k}) \mathbf{R}^\top(\omega_{l+k}) + \mathbf{B}(\omega_{l+k}) \mathbf{B}^\top(\omega_{l+k}) \right],
\]
and
\[
\widetilde{\mathbf{Q}}(\omega_l) = \frac{1}{2 m_T+ 1} \sum_{|k| \leq m_T} W_T(k) \left[ \mathbf{B}(\omega_{l+k}) \mathbf{R}^\top(\omega_{l+k}) - \mathbf{R}(\omega_{l+k}) \mathbf{B}^\top(\omega_{l+k}) \right].
\]
Note that $\widetilde{\mathbf{C}}(\omega_l)=\widetilde{\mathbf{C}}^{\top}(\omega_l)$ and $\widetilde{\mathbf{Q}}(\omega_l)=-\widetilde{\mathbf{Q}}^{\top}(\omega_l)$ for any frequency $\omega_l$. Then, in the same vein as \cite[Theorem C.6]{Shumway2017}, the smoothed periodogram matrix \(\widetilde{\mathbf{I}}_T(\omega_l)\) can be equivalently represented in real-valued form as the symmetrical matrix 
\[
\widetilde{\mathbf{\Sigma}}(\omega_l) = {\frac{1}{2}}
\begin{bmatrix}
\widetilde{\mathbf C}(\omega_l) & -\widetilde{\mathbf Q}(\omega_l) \\
\widetilde{\mathbf Q}(\omega_l) & \widetilde{\mathbf C}(\omega_l)
\end{bmatrix},
\]
which is a well-behaved estimator for the true covariance matrix ${\mathbf{\Sigma}}(\omega_l)$ defined in equation~\eqref{eq:truecov}.

\section{Asymptotic properties of the nonparametric estimator in Section~\ref{GNAR_spec_PN}}\label{app:B}

\renewcommand{\theprop}{\Alph{section}.\arabic{prop}}

To study the asymptotic properties of the nonparametric spectral estimator introduced in Section~\ref{GNAR_spec_PN}, we first establish some theoretical results adapted to this context, drawing in particular on \cite{Dempster1972} and Chapter 5 of \cite{lauritzen1996}.
 We establish the notation we use from here on in-line with graphical models. Stemming from the original graph $G=(K,E)$ with $d$ nodes in set $K$ and edges contained in the set $E$, we define an new undirected graph $\mathcal{G} = (\Gamma, \mathcal E)$ over $2d$ nodes, where $\Gamma = \{1, \ldots, 2d\}$ is the set of (new) nodes, and the new set of edges $\mathcal E \subseteq \Gamma \times \Gamma$ is encoded by the augmented adjacency matrix $\tilde{\mathbf{A}}_1 \in \{0,1\}^{2d \times 2d}$ generated from the original graph $G$ as described in equation~\eqref{eq:A1}.

\begin{proof}[Proof of Proposition~\ref{prop:Nonpar_Spec_Est}]
In order to establish the desired consistency result for the constrained spectral estimator, let us first establish the following two results.
\begin{prop}\citep[Adapted from Proposition 5.2 in][]{lauritzen1996}\label{Prop_cond_indep}
    Assume that $\widetilde{F}(\omega_l) \in \mathbb{R}^{2d}$ is a random vector such that $\widetilde{F}(\omega_l) \sim \mathcal{N}(0, \mathbf{\Sigma}(\omega_l))$, and let $\mathbf{\Theta}(\omega_l) = \mathbf{\Sigma}^{-1}(\omega_l) \in \mathbb{R}^{2d \times 2d}$ be the corresponding precision matrix. Then the pairwise conditional independence statement
\[
\widetilde{F}_\gamma(\omega_l) \perp \widetilde{F}_\mu(\omega_l) \mid \widetilde{F}_{\Gamma \setminus (\gamma, \mu)}(\omega_l) \iff \mathbf{\Theta}_{\gamma, \mu}(\omega_l) = 0
\]
holds if and only if $(\gamma, \mu) \notin \mathcal E$, i.e., the absence of an edge in the graph $\mathcal{G}$ implies conditional independence between the corresponding components of $\widetilde{F}(\omega_l)$ given all other components. 
\end{prop}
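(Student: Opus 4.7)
The plan is to follow the classical covariance-selection argument for multivariate Gaussians from Proposition 5.2 of \cite{lauritzen1996}, adapted to the augmented real-valued representation at frequency $\omega_l$. The starting point is the joint Gaussian density of $\widetilde{F}(\omega_l)\sim\mathcal{N}(0,\mathbf{\Sigma}(\omega_l))$, which up to a normalizing constant equals $\exp\bigl(-\tfrac{1}{2}\widetilde{F}(\omega_l)^\top\mathbf{\Theta}(\omega_l)\widetilde{F}(\omega_l)\bigr)$. Expanding this quadratic form componentwise identifies $\mathbf{\Theta}_{\gamma\mu}(\omega_l)$ as the coefficient of the interaction term $\widetilde{F}_\gamma(\omega_l)\widetilde{F}_\mu(\omega_l)$ in the log-density.

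First I would isolate the conditional density of $\bigl(\widetilde{F}_\gamma(\omega_l),\widetilde{F}_\mu(\omega_l)\bigr)$ given $\widetilde{F}_{\Gamma\setminus(\gamma,\mu)}(\omega_l)$. Conditioning on the complement fixes all remaining components, so the conditional log-density stays quadratic in $(\widetilde{F}_\gamma,\widetilde{F}_\mu)$, with linear shifts absorbed into the conditioning values and $\mathbf{\Theta}_{\gamma\mu}(\omega_l)$ surviving as the unique off-diagonal coefficient. Hence the conditional distribution is bivariate Gaussian whose $2\times 2$ precision block carries $\mathbf{\Theta}_{\gamma\mu}(\omega_l)$ on the off-diagonal.

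Next, invoking the standard fact that two jointly Gaussian variables are independent if and only if their off-diagonal precision entry vanishes gives the core equivalence
\[
\widetilde{F}_\gamma(\omega_l)\perp\widetilde{F}_\mu(\omega_l)\mid\widetilde{F}_{\Gamma\setminus(\gamma,\mu)}(\omega_l)\iff\mathbf{\Theta}_{\gamma,\mu}(\omega_l)=0.
\]
To tie this back to the graph $\mathcal{G}$, I would appeal to the structural constraint imposed in the penalized-likelihood formulation \eqref{eq:constrained_likelihood}: the Lagrange multipliers enforce $\mathbf{\Theta}_{\gamma,\mu}(\omega_l)=0$ precisely when $(\tilde{\mathbf{A}}_1)_{\gamma,\mu}=0$, i.e., when $(\gamma,\mu)\notin\mathcal{E}$. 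Chaining the two equivalences then yields the claim.

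The main subtlety, rather than a genuine obstacle, is verifying that the block structure of $\mathbf{\Sigma}(\omega_l)$ in \eqref{eq:truecov} is consistent with the augmented adjacency pattern \eqref{eq:A1}. Because $\tilde{\mathbf{A}}_1$ simply replicates $\mathbf{A}_1$ across its four $d\times d$ blocks, a zero in the original adjacency matrix translates uniformly into four zero entries of $\mathbf{\Theta}(\omega_l)$, so no spurious conditional dependencies arise between the real and imaginary parts of node pairs disconnected in $G$, and the pairwise Markov interpretation passes cleanly from $\mathcal{G}$ back to the original network.
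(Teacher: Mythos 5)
Your core argument is correct and is essentially the same route as the paper's: the paper's ``proof'' consists entirely of the citation to Proposition~5.2 of \cite{lauritzen1996}, and what you write out --- the quadratic form in the Gaussian log-density, the conditional distribution of $(\widetilde{F}_\gamma,\widetilde{F}_\mu)$ given the remaining coordinates, and the identification of $\mathbf{\Theta}_{\gamma\mu}(\omega_l)$ as the surviving interaction coefficient --- is precisely the classical covariance-selection argument that the cited result rests on. So for the equivalence $\widetilde{F}_\gamma \perp \widetilde{F}_\mu \mid \widetilde{F}_{\Gamma\setminus(\gamma,\mu)} \iff \mathbf{\Theta}_{\gamma,\mu}(\omega_l)=0$ you have supplied the details the paper omits, and they are right.

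The step that does not work is the link to the graph. You justify $\mathbf{\Theta}_{\gamma,\mu}(\omega_l)=0 \iff (\gamma,\mu)\notin\mathcal{E}$ by appealing to the Lagrange multipliers in the penalized likelihood~\eqref{eq:constrained_likelihood}. Those multipliers constrain the \emph{estimator} $\widehat{\widetilde{\mathbf{\Theta}}}(\omega_l)$; they say nothing about the \emph{population} precision matrix $\mathbf{\Theta}(\omega_l)=\mathbf{\Sigma}^{-1}(\omega_l)$ of the random vector $\widetilde{F}(\omega_l)$, which is the object in the proposition. One cannot derive a property of the estimand from a penalty imposed during estimation. The correct reading is that the zero pattern of the population precision at non-edges is the pairwise Markov property of the graphical model --- in this paper it is supplied by the construction of $\mathcal{G}$ via the augmented adjacency~\eqref{eq:A1} together with the GNAR-induced partial correlation structure of Theorem~\ref{thm2} (i.e., the graph is built so that absent edges correspond to conditional independences), not by the optimization in~\eqref{eq:constrained_likelihood}. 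Your final remark about the four-block replication of $\mathbf{A}_1$ ensuring no spurious real/imaginary cross-dependencies is a useful observation and is consistent with~\eqref{eq:truecov}, but it should be attached to this modeling assumption rather than to the penalty.
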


\begin{proof}
   Proposition~\ref{Prop_cond_indep} adapts Proposition~5.2 in \cite{lauritzen1996}, which presents the result for the multivariate normal distribution. Full details and proofs can be found therein.
\end{proof}

\begin{prop}
\label{thm:MLE}
\begin{enumerate}
    \item \label{item:existence} 
    Let \(\widetilde{\boldsymbol{\Sigma}}(\omega_l)\) be a well-behaved estimator of the covariance matrix \({\boldsymbol{\Sigma}}(\omega_l)\). Then there exists a unique estimator \(\widehat{\widetilde{\boldsymbol{\Sigma}}}(\omega_l)\) of \(\boldsymbol{\Sigma}(\omega_l)\) satisfying
    \[
    \widehat{\widetilde{\boldsymbol{\Sigma}}}_{\gamma, \mu}(\omega_l) =
    \widetilde{\boldsymbol{\Sigma}}_{\gamma, \mu}(\omega_l), \quad \text{for all } (\gamma,\mu) \in \mathcal{E} \text{ or } \gamma=\mu,
    \]
    and the constraint
    \[
    \widehat{\widetilde{\boldsymbol{\Theta}}}_{\gamma, \mu}(\omega_l) = \widehat{\widetilde{\boldsymbol{\Sigma}}}^{-1}_{\gamma, \mu}(\omega_l) = 0, \quad \text{for all } (\gamma, \mu) \notin \mathcal{E}.
    \]
    
    \item \label{item:MLE} 
    Among all multivariate normal models satisfying the constraints in item~\ref{item:existence}, the matrix \(\widehat{\widetilde{\boldsymbol{\Sigma}}}(\omega_l)\) is the maximum likelihood estimator (MLE) of \(\boldsymbol{\Sigma}(\omega_l)\).
\end{enumerate}
\end{prop}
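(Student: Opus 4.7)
The plan is to treat the two parts jointly by framing the whole problem as a strictly concave maximisation of the Gaussian log-likelihood over the set of positive-definite precision matrices constrained to respect the edge-zero pattern of $\mathcal{G}$, and then to identify the first-order conditions with the moment-matching equations in item~\ref{item:existence}. Concretely, for fixed frequency $\omega_l$ (suppressed henceforth), I consider
\begin{equation*}
\ell(\boldsymbol{\Theta}) \;=\; \log\det\boldsymbol{\Theta} - \operatorname{tr}\!\bigl(\widetilde{\boldsymbol{\Sigma}}\,\boldsymbol{\Theta}\bigr),
\qquad \boldsymbol{\Theta}\in\mathcal{M}_{\mathcal{G}} := \bigl\{\boldsymbol{\Theta}\succ 0 : \boldsymbol{\Theta}_{\gamma,\mu}=0 \ \forall (\gamma,\mu)\notin\mathcal{E},\; \gamma\neq\mu\bigr\}.
\end{equation*}
This is precisely the Gaussian log-likelihood (up to an additive constant and a factor) evaluated at sufficient statistic $\widetilde{\boldsymbol{\Sigma}}$, restricted to the linear exponential family of multivariate normals whose precision respects the edge structure of $\mathcal{G}$.

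First I would establish \textbf{uniqueness and the form of the first-order conditions}. Since $-\log\det$ is strictly convex on the positive-definite cone and the linear trace term is affine, $\ell$ is strictly concave; furthermore $\mathcal{M}_{\mathcal{G}}$ is the intersection of the open positive-definite cone with a linear subspace, hence convex. Differentiating with respect to the free entries $\boldsymbol{\Theta}_{\gamma,\mu}$ with $(\gamma,\mu)\in\mathcal{E}$ or $\gamma=\mu$, using $\partial\log\det\boldsymbol{\Theta}/\partial\boldsymbol{\Theta} = \boldsymbol{\Theta}^{-1}$, the stationarity conditions read $\bigl[\boldsymbol{\Theta}^{-1}\bigr]_{\gamma,\mu} = \widetilde{\boldsymbol{\Sigma}}_{\gamma,\mu}$ for exactly those index pairs. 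Writing $\widehat{\widetilde{\boldsymbol{\Sigma}}} := \widehat{\widetilde{\boldsymbol{\Theta}}}^{-1}$, this is the moment-matching condition in item~\ref{item:existence}, while membership in $\mathcal{M}_{\mathcal{G}}$ delivers the complementary zero-pattern for the precision. Strict concavity then gives uniqueness of any maximiser.

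Next I would handle \textbf{existence}, which is the step I expect to be the main obstacle. The feasible set $\mathcal{M}_{\mathcal{G}}$ is open, so one must rule out maximising sequences $\boldsymbol{\Theta}^{(n)}$ that either diverge in operator norm or approach the boundary of the positive-definite cone. Using that $\widetilde{\boldsymbol{\Sigma}}$ is well-behaved (interpreted as strictly positive definite, which holds almost surely under the smoothing in Section~\ref{sec:Np_AN}), one has $\operatorname{tr}(\widetilde{\boldsymbol{\Sigma}}\boldsymbol{\Theta}) \geq \lambda_{\min}(\widetilde{\boldsymbol{\Sigma}})\,\operatorname{tr}(\boldsymbol{\Theta})$, so the trace term dominates $\log\det\boldsymbol{\Theta}$ along any sequence with $\operatorname{tr}(\boldsymbol{\Theta}^{(n)})\to\infty$, forcing $\ell(\boldsymbol{\Theta}^{(n)})\to-\infty$; and if the smallest eigenvalue of $\boldsymbol{\Theta}^{(n)}$ tends to zero along a bounded sequence, $\log\det\boldsymbol{\Theta}^{(n)}\to-\infty$ while the trace term stays bounded, again giving divergence to $-\infty$. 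Hence maximising sequences stay in a compact subset of the open PD cone intersected with the linear constraint subspace, and continuity yields an attained maximiser in $\mathcal{M}_{\mathcal{G}}$. This is the existence assertion in item~\ref{item:existence}.

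Finally, \textbf{item~\ref{item:MLE}} follows immediately: the constrained maximiser of $\ell$ just constructed is, by definition, the MLE within the exponential family of zero-mean normal models with precisions in $\mathcal{M}_{\mathcal{G}}$, because $\ell$ is the log-likelihood (per observation, up to constants) of this family evaluated at the sufficient statistic $\widetilde{\boldsymbol{\Sigma}}$. The zero pattern enforced by $\mathcal{M}_{\mathcal{G}}$ is exactly the set of graphical conditional-independence constraints characterised in Proposition~\ref{Prop_cond_indep}, so $\widehat{\widetilde{\boldsymbol{\Sigma}}}$ is the MLE within the class described in the statement. The main technical hurdle is the existence argument above; the rest is a routine application of exponential-family duality, and the characterisation in item~\ref{item:existence} is simply the moment/natural-parameter dual pairing specialised to the graphical Gaussian model, as in Chapter~5 of \cite{lauritzen1996}.
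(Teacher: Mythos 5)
Your proof is correct, and its core — deriving the stationarity conditions of the Gaussian log-likelihood under the graph-induced zero pattern and identifying them with the moment-matching equations of item~1 — coincides with what the paper does. The difference is in how much is actually proved versus cited: the paper enforces the zero pattern via Lagrange multipliers, computes the same derivatives of $\log\det\boldsymbol{\Theta}$, arrives at the identical system $\bigl(\boldsymbol{\Theta}^{-1}\bigr)_{\gamma,\mu}=\widetilde{\boldsymbol{\Sigma}}_{\gamma,\mu}$ on $\mathcal{E}$ and the diagonal together with $\theta_{\gamma,\mu}=0$ off $\mathcal{E}$, and then delegates existence, uniqueness and the MLE identification to \citet[Appendix A]{Dempster1972} and \citet[Theorem 5.3]{lauritzen1996}. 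You instead restrict the optimisation directly to the linear subspace $\mathcal{M}_{\mathcal{G}}$ (equivalent to the multiplier formulation) and, more substantively, you supply the two pieces the paper only cites: uniqueness from strict concavity of $\log\det$ on the positive-definite cone intersected with a convex set, and existence from a coercivity argument showing that maximising sequences can neither blow up in trace (since $\operatorname{tr}(\widetilde{\boldsymbol{\Sigma}}\boldsymbol{\Theta})\geq\lambda_{\min}(\widetilde{\boldsymbol{\Sigma}})\operatorname{tr}(\boldsymbol{\Theta})$ dominates the logarithmic growth of $\log\det\boldsymbol{\Theta}$) nor degenerate to the boundary of the cone. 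Both steps are standard and correct, though note that your existence argument makes explicit a hypothesis the paper leaves vague: "well-behaved" must at least mean $\widetilde{\boldsymbol{\Sigma}}(\omega_l)\succ 0$, which is what the smoothed periodogram construction delivers almost surely. Your treatment is therefore more self-contained than the paper's; what the paper's route buys in exchange is brevity and a direct pointer to the general covariance-selection theory, which also covers the weaker clique-marginal positivity conditions under which existence still holds.
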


\begin{proof}
\begin{enumerate}
 \item 
 Let us first re-express the constrained log-likelihood from Section~\ref{GNAR_spec_PN} over the set of edges \(\mathcal E\),
\begin{equation}
\label{eq:log_likelihood_graph_explicit}
\ell(\mathbf{\Theta}) = \log \det \mathbf{\Theta} - \operatorname{tr}(\widetilde{\mathbf{\Sigma}} \mathbf{\Theta}) - \sum_{\{(\gamma,\mu)\,|\,(\gamma,\mu)\notin \mathcal{E} \}} \lambda_{\gamma,\mu} \theta_{\gamma,\mu},
\end{equation}
where \(\theta_{\gamma,\mu}:= \mathbf{\Theta}_{\gamma, \mu}\) and \(\lambda_{\gamma,\mu}\) are Lagrange multipliers imposing the constraint that \(\theta_{\gamma,\mu} = 0\) for all \((\gamma,\mu) \notin \mathcal{E}\), and we note that we dropped the implicit frequency dependence.

Differentiating the log-likelihood in equation~\eqref{eq:log_likelihood_graph_explicit} with respect to each free parameter \(\theta_{\gamma,\mu}\), we obtain the following.
\noindent{\em For diagonal entries} \(\gamma = \mu\), we have
\begin{equation}
\frac{\partial \ell}{\partial \theta_{\gamma,\gamma}} 
= \frac{1}{2} \frac{\partial \log \det \boldsymbol{\Theta}}{\partial \theta_{\gamma,\mu}}
- \frac{1}{2} \widetilde{\boldsymbol{\Sigma}}_{\gamma, \gamma},
\label{B:eq:1}
\end{equation}
since no penalty is imposed on the diagonal.

\noindent{\em For off-diagonal entries} \(\gamma \neq \mu\), we distinguish two cases:

\begin{itemize}
\item If \((\gamma,\mu) \in \mathcal{E}\):
\begin{equation}
\frac{\partial \ell}{\partial \theta_{\gamma,\mu}} 
= \frac{1}{2} \frac{\partial \log \det \boldsymbol{\Theta}}{\partial \theta_{\gamma,\mu}}
- \widetilde{\boldsymbol{\Sigma}}_{\gamma, \mu}.
\label{B:eq:2}
\end{equation}

\item If \((\gamma,\mu) \notin \mathcal{E}\):
\begin{equation}
\frac{\partial \ell}{\partial \theta_{\gamma,\mu}} 
= \frac{1}{2} \frac{\partial \log \det \boldsymbol{\Theta}}{\partial \theta_{\gamma,\mu}}
- \widetilde{\boldsymbol{\Sigma}}_{\gamma, \mu}
- \lambda_{\gamma,\mu}.
\label{B:eq:2-penalty}
\end{equation}
\end{itemize}

Recall that the derivative of the log-determinant \citep[see page 641][]{boyd2004} is
\begin{equation*}
\frac{\partial \log \det \boldsymbol{\Theta}}{\partial \theta_{\gamma,\mu}}
= \operatorname{tr}\left( \boldsymbol{\Theta}^{-1} \frac{\partial \boldsymbol{\Theta}}{\partial \theta_{\gamma,\mu}} \right).
\end{equation*}

Since $\frac{\partial \boldsymbol{\Theta}}{\partial \theta_{\gamma,\mu}}$
is a matrix with a 1 in the \((\gamma,\mu)\) and \((\mu,\gamma)\) positions for \(\gamma \neq \mu\), and a 1 in \((\gamma,\gamma)\) for the diagonal case, the derivative simplifies to
\begin{eqnarray*}
\frac{\partial \log \det \boldsymbol{\Theta}}{\partial \theta_{\gamma,\gamma}} &=& \big( \boldsymbol{\Theta}^{-1}\big)_{\gamma,\gamma}. \mbox{ for the diagonal},\\
\frac{\partial \log \det \boldsymbol{\Theta}}{\partial \theta_{\gamma,\mu}} &=& 2 \big( \boldsymbol{\Theta}^{-1}\big)_{\gamma,\mu},
\mbox{ for the off-diagonal}.
\end{eqnarray*}

\medskip

Hence,  equations \eqref{B:eq:1}, \eqref{B:eq:2}, and \eqref{B:eq:2-penalty} become

\noindent{For} \(\gamma = \mu\):
\begin{equation}
\frac{1}{2} \big( \boldsymbol{\Theta}^{-1}(\omega_l) \big)_{\gamma,\gamma} - \frac{1}{2} \widetilde{\boldsymbol{\Sigma}}_{\gamma, \gamma}(\omega_l) = 0,
\quad \text{or} \quad
\big( \boldsymbol{\Theta}^{-1}(\omega_l) \big)_{\gamma,\gamma} = \widetilde{\boldsymbol{\Sigma}}_{\gamma, \gamma}(\omega_l).
\label{eq:est_eq_1}
\end{equation}

\noindent{For} \(\gamma \neq \mu\), with \((\gamma,\mu) \in \mathcal E\):
\begin{equation}
\big( \boldsymbol{\Theta}^{-1}(\omega_l) \big)_{\gamma,\mu} = \widetilde{\boldsymbol{\Sigma}}_{\gamma, \mu}(\omega_l).
\label{eq:est_eq_2}
\end{equation}

\noindent{For} \(\gamma \neq \mu\), with \((\gamma,\mu) \notin \mathcal E\), the corresponding constraint forces:
\begin{equation}
\theta_{\gamma,\mu} = 0.
\label{eq:constraint}
\end{equation}

Equations~\eqref{eq:est_eq_1}, \eqref{eq:est_eq_2}, and \eqref{eq:constraint} are precisely the conditions stated in~\ref{item:existence}.

\item The specific choice of \(\widehat{\widetilde{\boldsymbol{\Sigma}}}(\omega_l)\) that satisfies the conditions in \ref{item:existence} corresponds to the maximum likelihood estimator \(\boldsymbol{\Sigma}(\omega_l)\); see, for example, \citet[][Appendix A]{Dempster1972} and \citet[][Theorem 5.3]{lauritzen1996}.

\end{enumerate}
\end{proof}
Now by Proposition~\ref{thm:MLE}, \(\widehat{\widetilde{\boldsymbol{\Sigma}}}(\omega_l)\) is the maximum likelihood estimator of \(\boldsymbol{\Sigma}(\omega_l)\), hence the estimator satisfies the following asymptotic properties for any Fourier frequency $\omega_l$: 

\begin{enumerate}[(i)]
    \item {consistency of the precision matrix estimator:}
    $
    \widehat{\widetilde{\boldsymbol{\Theta}}}(\omega_l) \xrightarrow{p} \boldsymbol{\Theta}(\omega_l).
   $
    
    \item {consistency of the covariance matrix estimator:}
    $
    \widehat{\widetilde{\boldsymbol{\Sigma}}}(\omega_l) := \widehat{\widetilde{\boldsymbol{\Theta}}}^{-1}(\omega_l) \xrightarrow{p} \boldsymbol{\Sigma}(\omega_l).
    $
\end{enumerate}
Connecting (ii) above to the spectral estimator obtained as proposed in~\eqref{eq:GNAR_NP_PN} via the real-valued matrix in~\eqref{eq:tildeSigma} leads to the desired consistency result, $\widehat{\tilde{\mathbf{f}}}(\omega_l,G) \xrightarrow{p} \mathbf{f}(\omega_l,G)$.
\end{proof}

\begin{remark}
The results stated in Proposition~\ref{thm:MLE}, which are based on the graph pattern induced by the adjacency matrix \(\widetilde{\mathbf{A}}_1\), also apply to higher-order adjacency structure introduced in Section~\ref{sec:GNAR_induced}. As the order \(r^*\) increases, the number of free parameters grows accordingly, and the conditions for the validity of the maximum likelihood estimator continue to hold.
\end{remark}

\section{Simulation results for the estimated spectrum}\label{app:C}

Table~\ref{Sim:Table_no_miss_spectrum} presents the simulation results for Models (M1–M5) and estimation methods (EM1–EM7) described in Section~\ref{sec:sim_results}.

\setlength{\tabcolsep}{3pt}  
\renewcommand{\arraystretch}{0.9} 

\begin{longtable}{|c||*{7}{c|}|*{7}{c|}}
\caption{RMSE $\times 100$ for the spectrum estimates obtained using seven estimation methods (EM1–EM7) across five models (M1–M5). Results are presented for both five-node and ten-node networks under increasing time series lengths $T = 100, 200, 500, 1000$.} \label{Sim:Table_no_miss_spectrum}\\ 
\hline
\multirow{2}{*}{\textbf{Model}} 
& \multicolumn{7}{c|}{\textbf{T = 100}} 
& \multicolumn{7}{c|}{\textbf{T = 200}} \\
\cline{2-15}
& \textbf{EM1} & \textbf{EM2} & \textbf{EM3} & \textbf{EM4} & \textbf{EM5} & \textbf{EM6} & \textbf{EM7} 
& \textbf{EM1} & \textbf{EM2} & \textbf{EM3} & \textbf{EM4} & \textbf{EM5} & \textbf{EM6} & \textbf{EM7} \\
\hline
\endfirsthead

\multicolumn{15}{c}{\textit{(Table \ref{Sim:Table_no_miss_spectrum} continued - T=100, 200)}} \\
\hline
\multirow{2}{*}{\textbf{Model}} 
& \multicolumn{7}{c|}{\textbf{T = 100}} 
& \multicolumn{7}{c|}{\textbf{T = 200}} \\
\cline{2-15}
& \textbf{EM1} & \textbf{EM2} & \textbf{EM3} & \textbf{EM4} & \textbf{EM5} & \textbf{EM6} & \textbf{EM7} 
& \textbf{EM1} & \textbf{EM2} & \textbf{EM3} & \textbf{EM4} & \textbf{EM5} & \textbf{EM6} & \textbf{EM7} \\
\hline
\endhead

\multicolumn{15}{r}{\textit{(continued on next page)}} \\
\endfoot
\endlastfoot

\multicolumn{15}{c}{\textit{Five Nodes Network}} \\
\hline
\textbf{M1} & 7.88 & 32.23 & 31.40 & 27.44 & 30.85 & 27.53 & 31.76 & 5.97 & 21.59 & 21.99 & 19.46 & 26.58 & 23.63 & 27.00 \\
\textbf{M2} & 6.87 & 26.23 & 26.23 & 22.80 & 26.16 & 23.08 & 26.16 & 4.77 & 16.92 & 16.92 & 16.37 & 22.27 & 20.20 & 22.27 \\
\textbf{M3} & 8.91 & 27.39 & 27.39 & 27.39 & 27.11 & 27.11 & 27.11 & 6.27 & 17.77 & 17.77 & 17.77 & 23.23 & 23.23 & 23.23 \\
\textbf{M4} & 9.91 & 35.33 & 35.33 & 35.33 & 26.78 & 26.78 & 26.78 & 6.95 & 21.73 & 21.73 & 21.73 & 22.80 & 22.80 & 22.80 \\
\textbf{M5} & 12.39 & 37.39 & 37.39 & 37.39 & 28.77 & 28.77 & 28.77 & 8.87 & 22.91 & 22.91 & 22.91 & 24.39 & 24.39 & 24.39 \\
\hline
\multicolumn{15}{c}{\textit{Ten Nodes Network}} \\
\hline
\textbf{M1} & 3.96 & 41.72 & 34.13 & 28.09 & 24.86 & 20.67 & 29.79 & 2.92 & 24.18 & 20.58 & 17.33 & 21.61 & 17.97 & 25.95 \\
\textbf{M2} & 3.41 & 33.59 & 32.02 & 21.09 & 24.01 & 15.91 & 25.08 & 2.27 & 18.81 & 17.97 & 12.31 & 20.64 & 13.80 & 21.64 \\
\textbf{M3} & 4.66 & 34.17 & 32.66 & 21.21 & 22.36 & 16.58 & 23.36 & 3.11 & 18.83 & 18.03 & 11.38 & 18.81 & 14.68 & 20.20 \\
\textbf{M4} & 5.01 & 39.61 & 34.87 & 27.84 & 23.19 & 17.78 & 24.86 & 3.35 & 21.84 & 20.33 & 14.54 & 19.93 & 15.88 & 21.95 \\
\textbf{M5} & 6.48 & 41.16 & 36.45 & 28.98 & 24.78 & 19.03 & 27.91 & 4.47 & 22.63 & 20.37 & 16.22 & 20.18 & 17.28 & 22.74 \\
\hline
\end{longtable}

\vspace{-1.5em} 
\addtocounter{table}{-1} 

\begin{longtable}{|c||*{7}{c|}|*{7}{c|}}
\caption{\textit{(continued)}} \\ 
\hline
\multirow{2}{*}{\textbf{Model}} 
& \multicolumn{7}{c|}{\textbf{T = 500}} 
& \multicolumn{7}{c|}{\textbf{T = 1000}} \\
\cline{2-15}
& \textbf{EM1} & \textbf{EM2} & \textbf{EM3} & \textbf{EM4} & \textbf{EM5} & \textbf{EM6} & \textbf{EM7} 
& \textbf{EM1} & \textbf{EM2} & \textbf{EM3} & \textbf{EM4} & \textbf{EM5} & \textbf{EM6} & \textbf{EM7} \\
\hline
\endfirsthead

\multicolumn{15}{c}{\textit{(Table \ref{Sim:Table_no_miss_spectrum} continued - T=500, 1000)}} \\
\hline
\multirow{2}{*}{\textbf{Model}} 
& \multicolumn{7}{c|}{\textbf{T = 500}} 
& \multicolumn{7}{c|}{\textbf{T = 1000}} \\
\cline{2-15}
& \textbf{EM1} & \textbf{EM2} & \textbf{EM3} & \textbf{EM4} & \textbf{EM5} & \textbf{EM6} & \textbf{EM7} 
& \textbf{EM1} & \textbf{EM2} & \textbf{EM3} & \textbf{EM4} & \textbf{EM5} & \textbf{EM6} & \textbf{EM7} \\
\hline
\endhead

\multicolumn{15}{r}{\textit{(continued on next page)}} \\
\endfoot
\endlastfoot

\multicolumn{15}{c}{\textit{Five Nodes Network}} \\
\hline
\textbf{M1} & 3.88 & 13.09 & 14.72 & 13.37 & 21.47 & 19.02 & 21.38 & 2.82 & 9.18 & 11.72 & 9.95 & 15.46 & 13.13 & 17.44 \\
\textbf{M2} & 3.11 & 10.17 & 10.17 & 12.46 & 17.90 & 17.15 & 17.90 & 2.10 & 6.94 & 6.94 & 9.59 & 14.22 & 11.57 & 14.22 \\
\textbf{M3} & 3.94 & 10.72 & 10.72 & 10.72 & 18.64 & 18.64 & 18.64 & 2.89 & 7.53 & 7.53 & 7.53 & 14.72 & 14.72 & 14.72 \\
\textbf{M4} & 4.40 & 12.72 & 12.72 & 12.72 & 18.27 & 18.27 & 18.27 & 3.07 & 8.86 & 8.86 & 8.86 & 14.77 & 14.77 & 14.77 \\
\textbf{M5} & 5.80 & 13.74 & 13.74 & 13.74 & 19.28 & 19.28 & 19.28 & 3.99 & 9.43 & 9.43 & 9.43 & 15.23 & 15.23 & 15.23 \\
\hline
\multicolumn{15}{c}{\textit{Ten Nodes Network}} \\
\hline
\textbf{M1} & 1.98 & 13.67 & 13.40 & 11.87 & 17.83 & 14.96 & 20.93 & 1.37 & 9.30 & 10.81 & 9.95 & 15.46 & 13.13 & 17.44 \\
\textbf{M2} & 1.48 & 10.48 & 10.43 & 8.22 & 16.83 & 11.67 & 18.55 & 1.00 & 6.86 & 6.86 & 6.86 & 14.25 & 9.84 & 15.70 \\
\textbf{M3} & 1.98 & 10.95 & 10.38 & 8.51 & 15.69 & 12.50 & 16.11 & 1.16 & 7.39 & 7.39 & 6.71 & 13.87 & 10.48 & 13.59 \\
\textbf{M4} & 2.17 & 11.39 & 10.78 & 10.24 & 15.41 & 12.30 & 16.29 & 1.28 & 7.47 & 6.79 & 6.91 & 13.50 & 10.44 & 13.92 \\
\textbf{M5} & 2.68 & 11.90 & 10.84 & 10.35 & 15.86 & 12.44 & 17.22 & 1.54 & 7.93 & 7.22 & 7.38 & 13.99 & 11.06 & 14.71 \\
\hline
\end{longtable}

\section{Supplementary visualizations and details for the global bank network connectedness analysis in Section~\ref{sec:application}
}\label{app:D}

This appendix provides supplementary details for the global bank connectedness application introduced in Section~\ref{sec:intro} and analyzed in Section~\ref{sec:application}. Figure~\ref{fig:network2} offers a localized view of the Americas and European networks, while Figure~\ref{fig:Stg_net} illustrates the $r$-stage neighborhood structure. Additionally, the Corbit plot for GNAR order selection is displayed in Figure~\ref{fig:corbit}, with further analysis of disconnected nodes provided in Figure~\ref{fig:gnar_selected2}.

\renewcommand{\arraystretch}{1.3}
\setlength{\tabcolsep}{10pt}
\begin{table}[htbp]
\centering
\caption{Regional Distribution of Stationary Bank Codes. Details on the corresponding bank names are provided in the Online Appendix of \cite{Demirer2018}.}
\label{app:table_banks}
\begin{tabular}{|c|l|}
\hline
\textbf{Region} & \textbf{Banks} \\
\hline

\multirow{5}{*}{\textcolor{red}{Americas}} 
& \textcolor{red}{\texttt{jpm.us}, \texttt{bac.us}, \texttt{wfc.us}, \texttt{gs.us}} \\
& \textcolor{red}{\texttt{ms.us}, \texttt{td.ca}, \texttt{ry.ca}, \texttt{bns.ca}} \\
& \textcolor{red}{\texttt{bmo.ca}, \texttt{cm.ca}, \texttt{bk.us}, \texttt{bbdc4.br}} \\
& \textcolor{red}{\texttt{pnc.us}, \texttt{cof.us}, \texttt{stt.us}, \texttt{na.ca}} \\
& \textcolor{red}{\texttt{sti.us}, \texttt{fitb.us}, \texttt{axp.us}} \\
\hline

\multirow{5}{*}{\textcolor{darkblue}{Europe}} 
& \textcolor{darkblue}{\texttt{bnp.fr}, \texttt{aca.fr}, \texttt{gle.fr}, \texttt{san.es}} \\
& \textcolor{darkblue}{\texttt{ucg.it}, \texttt{ubsn.ch}, \texttt{csgn.ch}, \texttt{isp.it}} \\
& \textcolor{darkblue}{\texttt{bbva.es}, \texttt{stan.gb}, \texttt{dan.dk}, \texttt{kbc.be}} \\
& \textcolor{darkblue}{\texttt{dexb.be}, \texttt{bmps.it}, \texttt{sab.es}, \texttt{pop.es}} \\
& \textcolor{darkblue}{\texttt{bp.it}, \texttt{seba.se}, \texttt{dnb.no}} \\
\hline

\multirow{4}{*}{\textcolor{darkgreen}{Asia}} 
& \textcolor{darkgreen}{\texttt{mzh.jp}, \texttt{rsnh.jp}, \texttt{smtm.jp}, \texttt{wrfh.kr}} \\
& \textcolor{darkgreen}{\texttt{shf.kr}, \texttt{ibk.kr}, \texttt{ffg.jp}, \texttt{boy.jp}} \\
& \textcolor{darkgreen}{\texttt{cbb.jp}, \texttt{shzb.jp}, \texttt{bob.in}, \texttt{isctr.tr}} \\
& \textcolor{darkgreen}{\texttt{cmb1.cn}, \texttt{cmb2.cn}} \\
\hline

\multirow{1}{*}{\textcolor{darkviolet}{Australia}} 
& \textcolor{darkviolet}{\texttt{nab.au}, \texttt{cba.au}, \texttt{anz.au}, \texttt{wbc.au}} \\
\hline

\end{tabular}
\end{table}

\begin{figure}[!ht]
  \centering
  \includegraphics[width=0.5\linewidth]{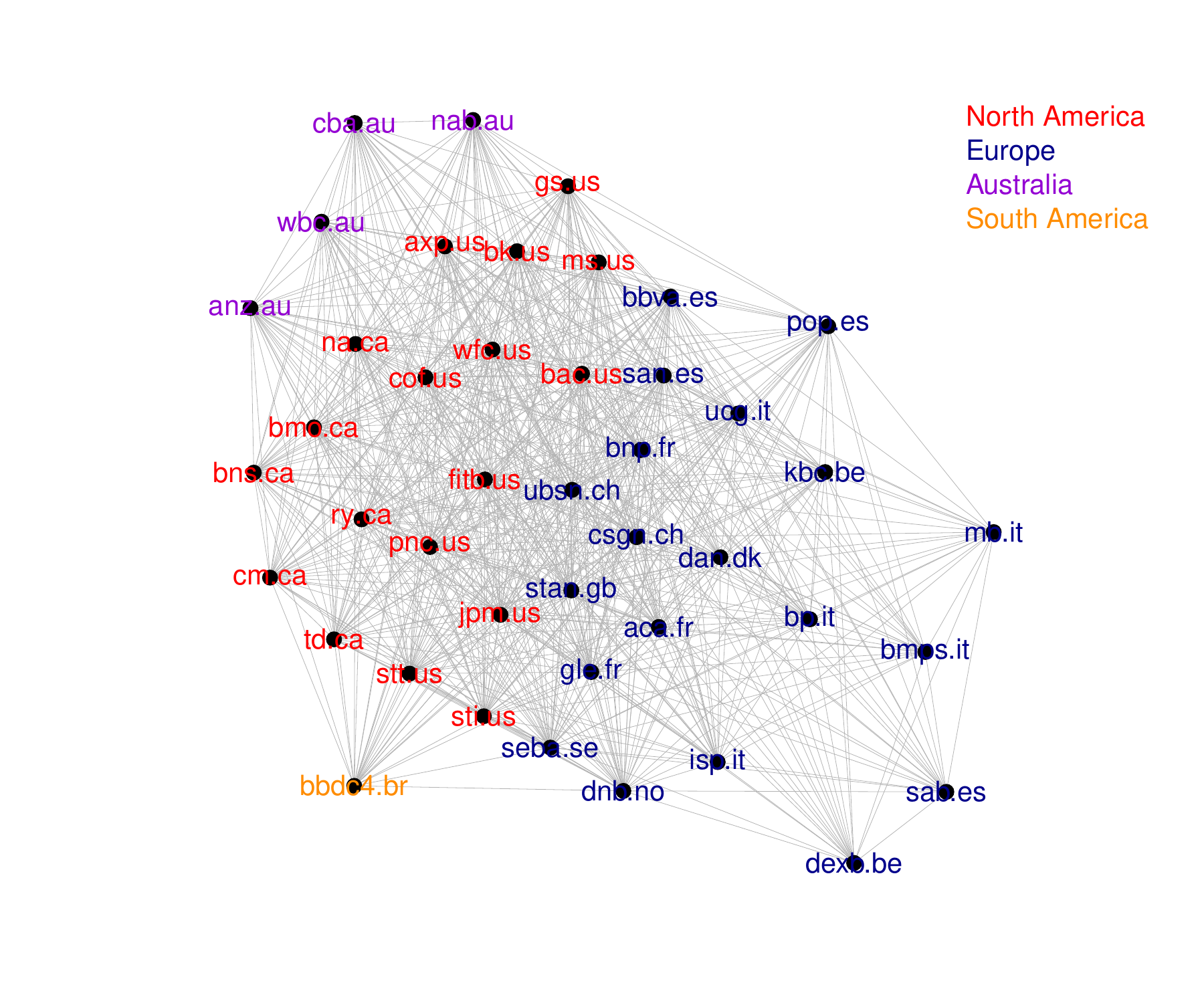}
  \caption{Zoomed-in adjacency network excluding Asian countries, estimated via GFEVD from log-volatility series using Lasso-VAR.}
  \label{fig:network2}
\end{figure}

\begin{figure}[!ht]
  \centering
  \includegraphics[width=0.5\linewidth]{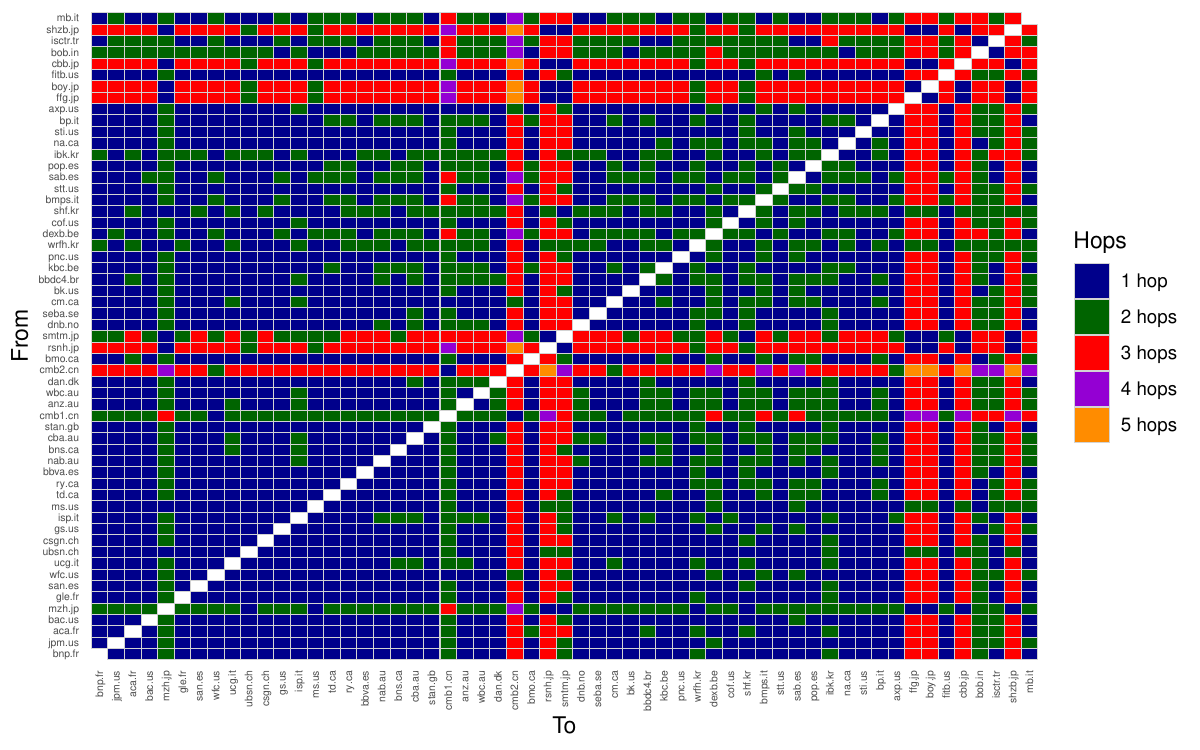}
  \caption{\( r \)-stage neighborhood structures in the global volatility network.}
  \label{fig:Stg_net}
\end{figure}

\begin{figure}[!ht]
  \centering
  \includegraphics[width=0.4\linewidth]{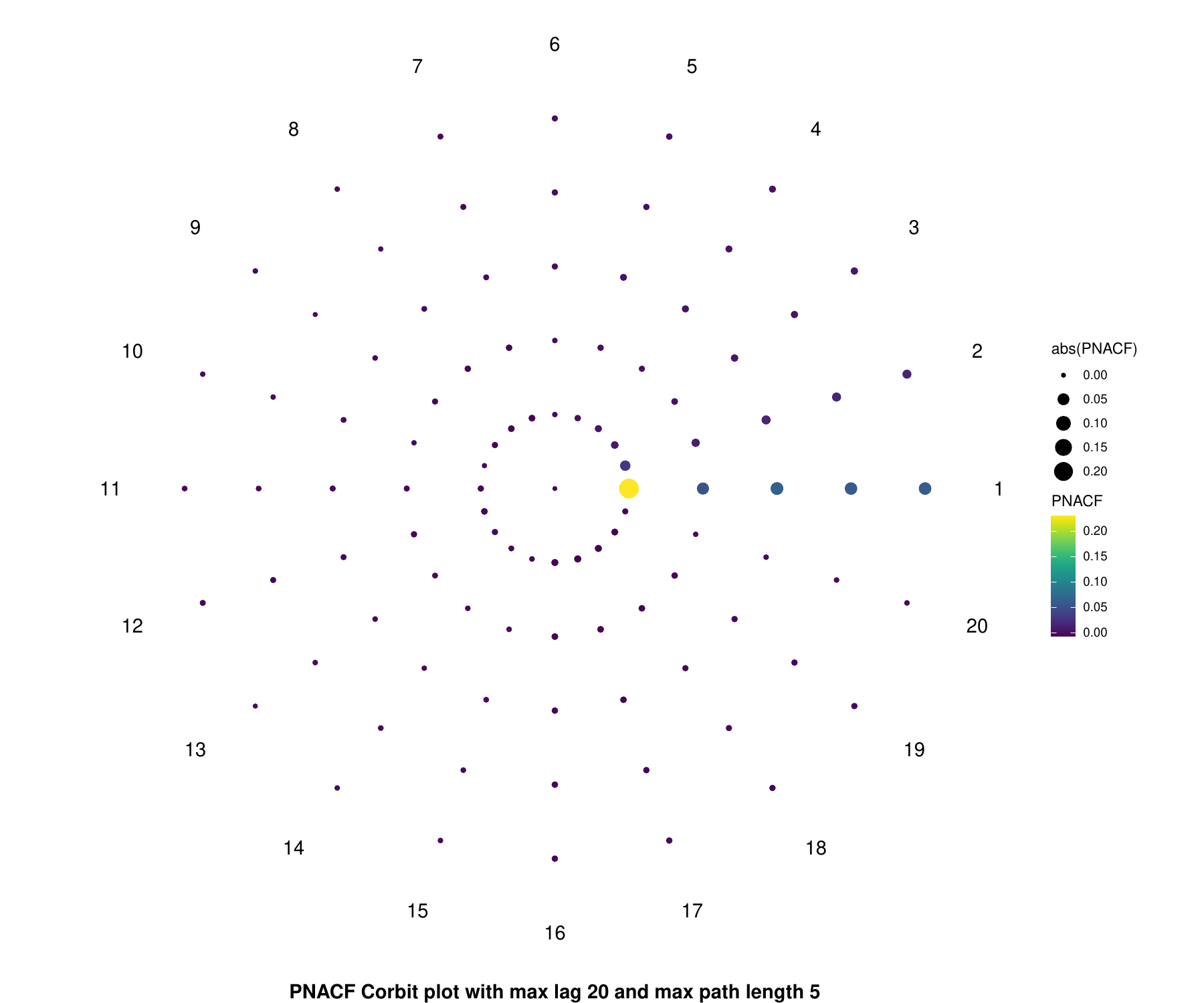}
  \caption{PNACF Corbit plot for the log-volatility series.}
  \label{fig:corbit}
\end{figure}

\begin{figure}[!ht]
    \centering
    \includegraphics[width=0.40\textwidth]{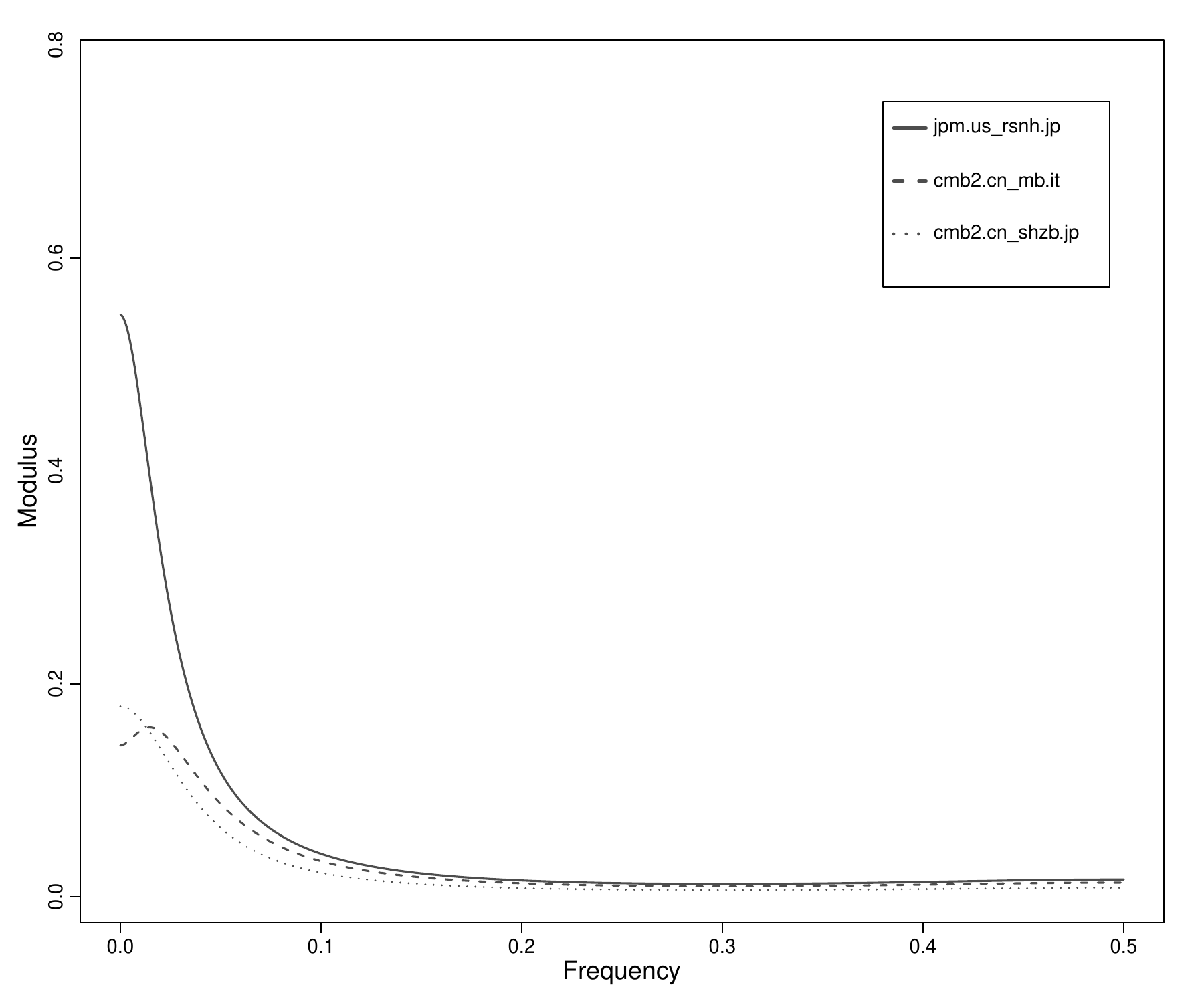} \quad
    \includegraphics[width=0.40\textwidth]{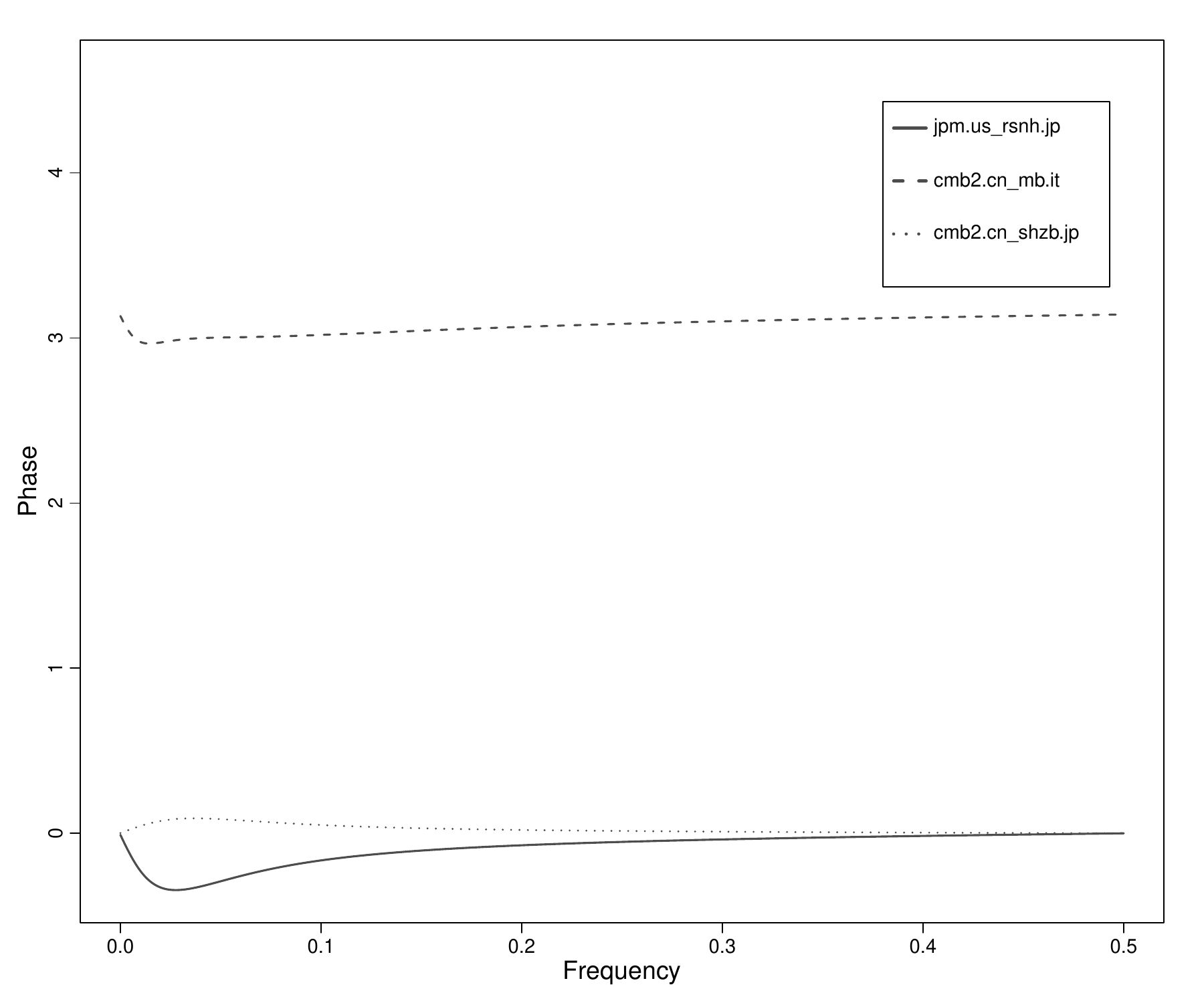} \\
    \vspace{0.5em}
    \includegraphics[width=0.40\textwidth]{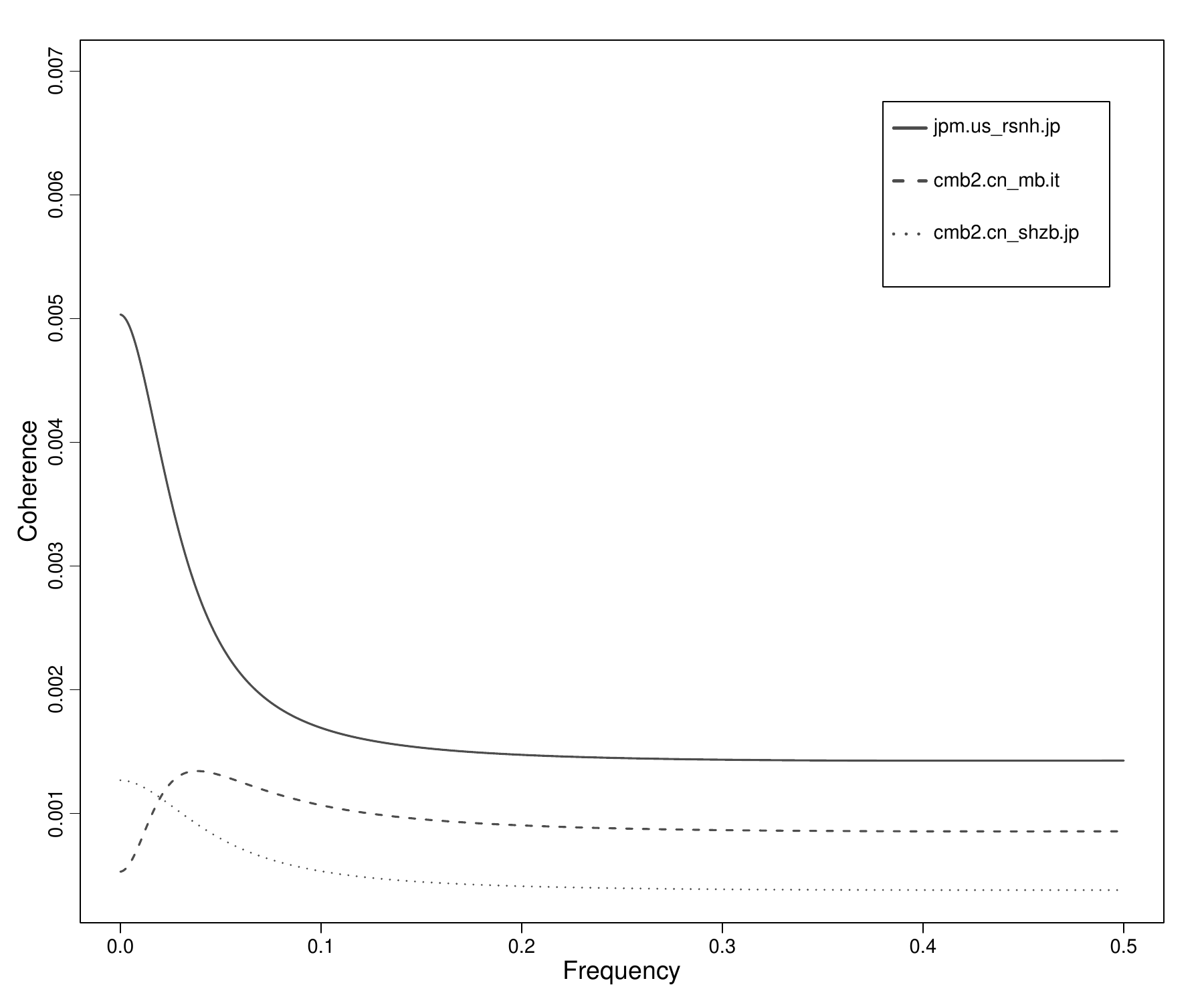} \quad
    \includegraphics[width=0.40\textwidth]{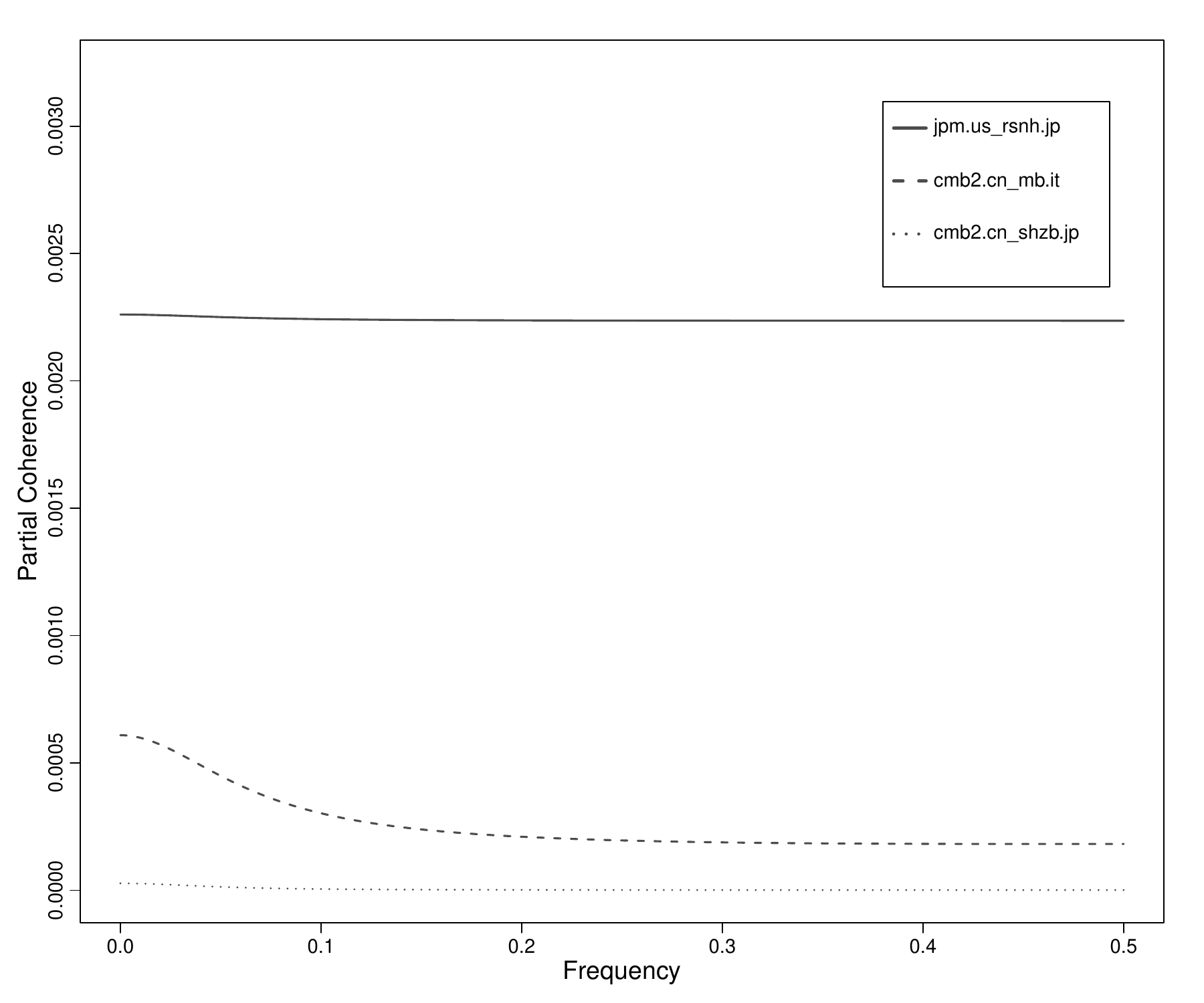}
    \caption{Spectral summaries for the three selected disconnected bank pairs: 
\textbf{JPMorgan Chase \& Co.}--\textbf{Resona Holdings} (\texttt{jpm.us}--\texttt{rsnh.jp}) ($r=3$), 
\textbf{China Merchants Bank}--\textbf{Mediobanca Banca di Credito Finanziario} (\texttt{cmb.cn}--\texttt{mb.it})($r=4$), 
and \textbf{China Merchants Bank}--\textbf{Shizuoka Bank} (\texttt{cmb.cn}--\texttt{shzb.jp})($r=5$). 
Top row: modulus \(|[\widehat{\mathbf{f}}(\cdotp,G)]_{ij}|\) and phase \(\arg([\widehat{\mathbf{f}}(\cdotp,G)]_{ij})\) of estimated cross-spectra. Bottom row: squared coherence \([\widehat{\boldsymbol{\rho}}(\cdotp,G)]_{ij}^2\) and partial coherence \([\widehat{\boldsymbol{\gamma}}(\cdotp,G)]_{ij}^2\). 
}
    \label{fig:gnar_selected2}
\end{figure}

\end{document}